\newcommand{\N}{\mathbb{N}}
\newcommand{\Cc}[0]{\ensuremath{\mathscr{C}}\xspace}
\newcommand{\CC}[0]{\mathrm{\mathscr{C}}}
\newcommand{\DD}[0]{\mathrm{\mathscr{D}}}
\newcommand{\NN}[0]{\mathrm{\mathbb{N}}}
\newcommand{\C}{\mathcal{C}}
\newcommand{\Dcal}{\ensuremath{\mathcal{D}}}
\newcommand{\Tcal}{\ensuremath{\mathcal{T}}}
\newcommand{\flip}[1]{\mathsf{#1}}
\newcommand{\FF}{\mathcal{F}}
\renewcommand{\phi}{\varphi}
\renewcommand{\le}{\leqslant}
\renewcommand{\ge}{\geqslant}
\renewcommand{\leq}{\le}
\renewcommand{\geq}{\ge}
\newcommand{\ramsey}[2]{\ensuremath{\mathscr{R}(#1,#2)}\xspace}
\newcommand{\fo}{\textsc{FO}\xspace}
\newcommand{\np}{\ensuremath{\mathsf{NP}}\xspace}
\def\ie{\emph{i.e.}\xspace}
\newtheorem*{maintheorem}{Main Theorem}
\title{Weakly-sparse and strongly flip-flat classes of graphs are uniformly almost-wide} 
  \titlerunning{Sparse strongly flip-flats are almost-wide } 
\author{Fatemeh Ghasemi}{LACL, Universit\'{e} Paris-Est Cr\'{e}teil, France}{fatemeh.ghasemi@lacl.fr}{}{}
\author{Julien Grange}{LACL, Universit\'{e} Paris-Est Cr\'{e}teil, France}{julien.grange@lacl.fr}{https://orcid.org/0009-0005-0470-1781}{}
\author{Mamadou Moustapha Kant\'{e}}{Universit\'{e} Clermont Auvergne, Clermont Auvergne INP, LIMOS, CNRS, Aubi\`{e}re, France}{mamadou.kante@uca.fr}{https://orcid.org/0000-0003-1838-7744}{}
\author{Florent Madelaine}{LACL, Universit\'{e} Paris-Est Cr\'{e}teil, France}{florent.madelaine@u-pec.fr}{https://orcid.org/0000-0002-8528-7105}{}
\authorrunning{F. Ghasemi, J. Grange, M.M. Kant\'{e}, F. Madelaine} 
\keywords{Almost-wide, Flip-flatness} 
\begin{document}

\maketitle

\begin{abstract}
In this work we take a step towards characterising \emph{strongly flip-flat} classes of graphs. 
Strong flip-flatness appears to be the analogue of uniform almost-wideness in the setting of dense classes of graphs.
We prove that strongly flip-flat classes of graphs that are \emph{weakly sparse} are indeed uniformly almost-wide. 
\end{abstract}

\section{Introduction}
When Trakhtenbrot showed that the set of \emph{first-order} formulas (\fo formulas for short) that are valid on finite structures is \emph{co-recursively enumerable}, but not \emph{recursively enumerable},  a beam of light was shed on the curious inversion of finite structures and infinite structures; as the same set is recursively enumerable (G\"odel's completeness), but not co-recursively enumerable (Church's theorem) on structures. This gave birth to finite model theory \cite{ebbinghaus2006finite}
 and boosted the motivation to show its differences with the world of infinites. 
 Some classical results, such as the \emph{compactness theorem}, almost immediately broke down; some, like the \emph{L\"owenheim-Skolem theorem}, became meaningless; and some remained valid in finite structures, such as \emph{Ehrenfeuchet-Fra\"iss\'e} games. Moreover, many results were discovered to be native of finites and not hold in the infinites such as Fagin showing that on finites structures $\Sigma_1^1= \np$ \cite{baldwin2000finite}.


A category of theorems from classical model theory are \emph{preservation} theorems. They describe the relation between syntactic and semantic properties of first-order formulas. \emph{\L{}o\'s-Tarski, Lyndon's positivity} and \emph{Homomorphism preservation theorem} are examples of preservation theorems \cite{hodges1993model}. The \emph{\L{}o\'s-Tarski} theorem, also known as \emph{extension preservation theorem}, which asserts that a first-order formula is preserved under embeddings between all structures if and only if it is equivalent to an existential formula, was shown to fail in the finite from early on \cite{tait1959counterexample}. On the contrary, Rossman showed that the homomorphism preservation theorem asserting that a formula is preserved by homomorphisms if and only if it is existential-positive, holds on \emph{finite structures} \cite{rossman2008homomorphism}. Before Rossman, Atserias et al. attempted to show positive results regarding both extension and homomorphism preservation  by further restricting finite structures to tame classes defined by sparsity conditions
 (with additional closure conditions)~\cite{atserias2008preservation,atserias2006preservation,dawar2024preservation,dawar2010homomorphism}, and the notions such
 as \emph{wideness, quasi-wideness}, \emph{almost-wideness} and their uniform versions were introduced for that purposes.
 
 These notions proved to be useful in other contexts. Informally, a class of graphs $\mathscr{C}$ is said quasi-wide if, for every integer $r$ and in any large enough graph $G$ 
 from $\mathscr{C}$, we can find a small set of vertices $S$ and a large enough set $B\subseteq V(G)\setminus S$ such that any two vertices in $B$ are at
 distance at least $r+1$ in $G\setminus S$, \ie, the induced graph on $V(G)\setminus S$. Being \emph{uniformly quasi-wide}, means that the above statement is true not only for
 any large enough graph, but also for their large enough subgraphs. The well-known notion of \emph{nowhere dense} classes of graphs are characterised to be
 exactly the uniformly quasi-wide classes of graphs \cite{nevsetvril2012sparsity}, and 
 this characterisation lies at the heart of another combinatorial characterisation of nowhere dense classes of graphs through the notion of \emph{splitter
   game}, which helped proving that \emph{$\fo$ model checking} on nowhere dense classes of graphs is 
 \emph{fixed parameter tractable} \cite{grohe2017deciding}.

 It is therefore natural to ask for which other graph classes such characterisations or algorithmic results can be obtained, and there is a large effort from
 the finite model theory and algorithmic/structural graph theory communities to push them to hereditary dense structures (see for instance
 \cite{dreier2024flip,gajarsky2020first,dreier2024first,torunczyk2023flip} to cite a few). With this aim, the well-studied model-theoretic notion of \emph{monadic stability} beautifully manifested itself.

 The concept of monadic stability was introduced by Baldwin and Shelah \cite{baldwin1985second}, and was mainly studied within \emph{model theory} and in connection with Shelah's \emph{classification theory} \cite{shelah1990classification}. Their work provides an understanding of the structure of monadically stable classes from the model-theoretic perspective.
  However,  describing their combinatorial properties has started only recently. Immediately after, the two combinatorial characterisations of monadically stable graph classes through the notions of \emph{flip-flatness} \cite{dreier2023indiscernibles} and the \emph{flipper game} \cite{flipper-game}, came handy in proving that $\fo$-model checking is fixed parameter tractable on monadically stable classes of graphs \cite{ssmc, dreier2024first}. Informally, flip-flatness means that in a big enough graph we can perform a certain number of \emph{flips}, that is flipping the adjacency between a pair of subsets of vertices, such that in the final structure, there exists a large enough set of vertices that are pairwise far away. When the number of flips is independent from the distance between the far-away subset of vertices and only depends on the class, we call the class \emph{strongly flip-flat}.
  Recently, Eleftheriadis extended the result on extension preservation theorem to a restriction of the strongly flip-flat classes of graphs \cite{eleftheriadis2024extension}.
  
  We recall Eleftheriadis' theorem after explaining the notation. If $G$ is a graph and $F=(A,B)$ is a pair of subsets of vertices of $G$, the \emph{flip of $G$ on $F$}, denoted by $G\oplus F$, is the graph obtained from $G$ by
 flipping the
 edges between vertices in $A$ and vertices in $B$, where $A$ and $B$ can have non-empty intersection. Flipping edges between $A$ and $B$  means replacing every
 edge by a non-edge and every non-edge by an edge. One can check that $G\oplus F\oplus F' = G\oplus F'\oplus F$. A family $\mathcal{F}$ of $k$ pairs of subsets of vertices is called a \emph{$k$-flip in $G$}, and for
 a \emph{$k$-flip} $\mathcal{F} := \{F_1,\ldots,F_k\}$ in $G$, we denote by $G \oplus \mathcal{F}$ the graph $G\oplus F_1 \oplus F_2\oplus \cdots \oplus
 F_k$, and by $G\star_{\mathcal{F}} G$ the graph $G'\oplus \mathcal{F'}$ where $G'$ is the disjoint union of two copies $G_1$
 and $G_2$ of $G$ and $\mathcal{F'}=\{(A_1,B_2)\mid (A,B)\in \mathcal{F}, A_1\ \textrm{(resp. $B_2$)}$ the copy of $A\ \textrm{(resp. $B$)}$ in $G_1\ \textrm{(resp. $G_2$)}\}$.

%
 \begin{theorem}[\cite{eleftheriadis2024extension}] Let $\C$ be a hereditary class of graphs. Suppose that there is some $k \in \N$ such that, for all $r \in
   \N$, there is a function $f_r:\N \to \N$ satisfying that, for every $m \in \N$ and every $G \in \C$ of size at least $f_r(m)$, there is a $k$-partition $P$ of
   $V(G)$, some $k$-flip $\mathcal{F} \subseteq P\times P$, and $A \subseteq V(G)$ such that 
   \begin{enumerate}
        \item $|A|\geq m$;
        \item $A$ is $r$-independent in $G \oplus \mathcal{F}$;
        \item\label{ass:3} $G\star_{\mathcal{F}}G \in \C$.
    \end{enumerate}
    Then extension preservation holds over $\C$.
  \end{theorem}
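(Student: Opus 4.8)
The plan is to follow the template of Atserias--Dawar--Grohe for extension preservation, adapting it to flips. Fix an $\FO$ sentence $\phi$ that is preserved under embeddings over $\C$, and let $q=\qr(\phi)$. The target existential sentence is the candidate $\psi_N:=\bigvee_H \chi_H$, where $H$ ranges over the (finitely many) graphs with $|V(H)|\le N$ and $H\models\phi$, and $\chi_H$ is the existential sentence asserting that $H$ embeds as an induced subgraph; the threshold $N$ will be extracted from the combinatorial analysis below. One direction is immediate: if $G\in\C$ and $G\models\psi_N$, then $G$ contains an induced copy of some $H\models\phi$; as $\C$ is hereditary we have $H\in\C$, and since $\phi$ is preserved under embeddings, $G\models\phi$. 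Thus $\psi_N\models_\C\phi$ for every $N$, and the whole difficulty is concentrated in the converse.

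The converse amounts to a bounded-witness statement: there is an $N$ such that every $G\in\C$ with $G\models\phi$ already contains an induced substructure of size at most $N$ satisfying $\phi$. First I would linearise the flip. Colouring $V(G)$ by the $k$-partition $P$, the map $G\mapsto G\oplus\mathcal{F}$ and its inverse are quantifier-free interpretations that use only the $k$ unary predicates naming $P$; hence there is a sentence $\phi^\circ$ with $\qr(\phi^\circ)=q$ such that $G\models\phi$ iff $(G\oplus\mathcal{F},P)\models\phi^\circ$. Writing $G^\ast:=(G\oplus\mathcal{F},P)$, the hypothesis now supplies, inside $G^\ast$, an $r$-independent set $A$ with $|A|\ge m$; and since deletions commute with the interpretation (the vertex set is untouched, and $(G\oplus\mathcal{F})[X]=G[X]\oplus(\mathcal{F}|_X)$), any induced submodel of $G^\ast$ yields an induced submodel of $G$ of the same size. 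So it suffices to shrink $G^\ast$.

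To shrink $G^\ast$ I would invoke Gaifman locality at a radius $r$ chosen large with respect to $q$: $\phi^\circ$ becomes a Boolean combination of basic local sentences counting, up to fixed thresholds, the scattered realisers of each local $q$-type. Because the number of flips $k$ is fixed, only boundedly many colours enter, so the locality radius and the list of relevant local types stay finite. The set $A$ being $r$-independent and as large as we like, a pigeonhole/Ehrenfeucht--Fra\"iss\'e interchangeability argument lets me delete a realiser from any over-threshold type without changing the value of $\phi^\circ$, producing a strictly smaller induced model; iterating drives the construction toward bounded size. The role of assumption~\ref{ass:3} is to keep these manipulations inside $\C$: a direct computation shows $(G\star_{\mathcal{F}}G)\oplus\widehat{\mathcal{F}}=(G\oplus\mathcal{F})\sqcup(G\oplus\mathcal{F})$ for the flip $\widehat{\mathcal{F}}=\{(A_1\cup A_2,B_1\cup B_2):(A,B)\in\mathcal{F}\}$ that applies $\mathcal{F}$ across both copies, so condition~\ref{ass:3}, read through the interpretation, furnishes exactly the closure under disjoint unions of scattered copies that the composition step of the argument requires.

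The main obstacle is that, unlike in the sparse almost-wide setting, the balls around the elements of $A$ in the dense graph $G\oplus\mathcal{F}$ need not have bounded size, so the reduction above shrinks the scattered set but not, a priori, the bulk of $G$. Overcoming this is the heart of the proof: one must combine the local decomposition around $A$ with a Feferman--Vaught-style composition over the (definably realised) disjoint union of copies, using the hereditary closure together with assumption~\ref{ass:3} to replace and recombine the non-scattered part while simultaneously preserving membership in $\C$ and the $q$-type, until the entire structure is brought below a bound $N$ depending only on $q$ and $k$. Once $N$ is fixed, $\psi_N$ is the desired existential equivalent of $\phi$ over $\C$.
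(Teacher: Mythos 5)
Note first that the paper does not prove this theorem: it is imported verbatim from Eleftheriadis's work and cited as such, so there is no in-paper proof to compare against. Judging your proposal on its own terms and against the Atserias--Dawar--Grohe-style argument that Eleftheriadis adapts, you have correctly identified the template: the candidate existential sentence as a disjunction over bounded-size models realised as induced subgraphs, the easy direction via hereditariness, the reduction of the hard direction to bounding the size of minimal induced models, the flip as a quantifier-free interpretation over the $k$-partition colours, and condition~(3) as the closure needed to duplicate scattered material inside $\C$. But the proposal has a genuine gap exactly where the work is.

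The gap is that your hard direction never actually invokes the hypothesis that $\phi$ is preserved under embeddings. You propose to ``delete a realiser from any over-threshold type without changing the value of $\phi^\circ$'', but a realiser of a local type is an entire $r$-ball, which after the flip may be most of the graph, and locality plus pigeonhole alone does not shrink the model. The standard argument instead takes a \emph{minimal} induced model $G$ of $\phi$ in $\C$, uses the scattered set to find many elements whose pointed neighbourhoods are $\equiv_q$-equivalent, builds an \emph{extension} of a proper induced substructure by re-attaching duplicated copies of one such neighbourhood (this is precisely where $G\star_{\mathcal{F}}G\in\C$ and hereditariness keep the construction inside $\C$), and then plays preservation under embeddings against a Feferman--Vaught composition to contradict minimality. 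Your paragraph beginning ``Overcoming this is the heart of the proof'' describes this step rather than carrying it out, so what you have is an outline, not a proof. Separately, the identity $(G\star_{\mathcal{F}}G)\oplus\widehat{\mathcal{F}}=(G\oplus\mathcal{F})\sqcup(G\oplus\mathcal{F})$ is false as stated: applying $\widehat{\mathcal{F}}=\{(A_1\cup A_2,B_1\cup B_2)\}$ on top of the cross-flip $(A_1,B_2)$ restores the pairs in $A_1\times B_2$ (flipped twice) but flips the pairs in $A_2\times B_1$ exactly once, leaving a spurious cross-flip between the two copies; the bookkeeping of which flip recovers two disjoint copies of $G\oplus\mathcal{F}$ has to be redone.
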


The first two conditions are imposing strong flip-flatness and the third condition is a closure condition which does not naturally hold.
 While it is evident that strongly flip-flat classes of graphs are a subset of flip-flat graph classes, a clear characterisation is still needed.
 
We initiate a characterisation of strongly flip-flat graph classes by looking at their intersection with \emph{weakly sparse} classes of graphs, \ie,
classes of graphs that exclude some biclique. The reason we believe this is a good starting point is that, for many classes of dense graphs, their intersection
with weakly sparse classes of graphs has been observed to be well-behaved classes of graphs, and as such examples we can cite:
 \begin{itemize}
 	\item Bounded shrubdepth $\cap$ Weakly sparse $=$ Bounded treedepth \cite{de2025transducing} (see also the discussion of \cite{gajarsky2022stable}),
 	\item Bounded clique-width $\cap$ Weakly sparse $=$ Bounded tree-width \cite[Theorem 5.9(4)]{courcelle2000upper},
 	\item Bounded linear clique-width $\cap$ Weakly sparse $=$ Bounded path-width \cite{gurski2000tree}. In \cite{gurski2000tree} Gurski and Wanke only
          state the result for graphs of bounded clique-width, which is an improvement on the bound given in \cite[Theorem 5.9(4)]{courcelle2000upper}. However,
          they indeed prove that one can construct a tree decomposition from a given clique-width expression (equivalently NLC-width expression), while keeping the shape of the expression. Hence, when the expression has the form of a path, \ie, linear clique-width is bounded, the resulting tree will look like a path, \ie, path-width is bounded.
 	\item Monadically stable $\cap$ Weakly sparse $=$ Nowhere dense \cite{dvovrak2018induced,podewski1978stable, adler2014interpreting}
 	\item Monadically dependent $\cap$ Weakly sparse $=$ Nowhere dense \cite{dvovrak2018induced} (see also \cite{nevsetvril2021classes}), and
 	\item Bounded flip-width $\cap$ Weakly sparse $=$ Bounded expansion \cite{torunczyk2023flip}.
 \end{itemize}

 Hence, following this line of research, we prove the following.

\begin{maintheorem}
	A class of graphs $\Cc$ is a weakly sparse and strongly flip-flat if and only if $\CC$ is uniformly almost-wide.

\end{maintheorem} 

$\fo$-transductions are a way of constructing new classes of graphs from a given class $\CC$ by the means of $\fo$ formulas. Basically, we have the power to colour
the vertices of the graphs arbitrarily, but with a constant number of colours, and then change the vertex and edge set as the given $\fo$ formulas
dictates. These formulas can use the colours. When a class $\CC$ has property $\mathcal{P}$, for any $\fo$-transduction of $\CC$, we say that it is \emph{structurally}
$\mathcal{P}$. For example, it has been shown that structurally nowhere dense classes of graphs are monadically stable and hence flip-flat \cite{podewski1978stable,
  adler2014interpreting, dreier2023indiscernibles}. On the other hand, it is shown that, for some weakly sparse classes of graphs, their $\fo$-transductions are exactly some well-known, yet
dense classes of graphs. As a prominent example, it is proven that structurally bounded tree-depth classes of graphs are exactly bounded shrubdepth classes of
graphs \cite{ganian2012trees}. In addition to this,  Eleftheriadis also showed that in the setting of hereditary classes of graphs, transductions of uniformly almost-wide classes of graphs are strongly flip-flat
\cite{eleftheriadis2024extension}. Therefore, we conjecture the following, supported by the \emph{sparsification conjecture} stating that monadically stable graph classes are exactly the structurally nowhere dense classes of
graphs.

 \begin{conjecture}\label{conjecture}
 	Every strongly flip-flat graph class is structurally uniform  almost-wide.
 	\end{conjecture}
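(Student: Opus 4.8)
The plan is to first reduce the conjecture to a pure \emph{sparsification} statement. Since every uniformly almost-wide class is nowhere dense and hence weakly sparse, the Main Theorem says exactly that a class is uniformly almost-wide if and only if it is weakly sparse and strongly flip-flat. As being structurally uniformly almost-wide means being contained in an $\fo$-transduction of a uniformly almost-wide class, it follows that $\Cc$ is structurally uniformly almost-wide if and only if there are a weakly sparse strongly flip-flat class $\Dd$ and an $\fo$-transduction $\mathsf{T}$ with $\Cc \subseteq \mathsf{T}(\Dd)$. Combined with Eleftheriadis' converse (transductions of uniformly almost-wide classes are strongly flip-flat), the conjecture becomes the equality ``strongly flip-flat $=$ structurally uniformly almost-wide'', whose only missing inclusion is the following claim, which I take as the central target: \textbf{every strongly flip-flat class is an $\fo$-transduction of a weakly sparse strongly flip-flat class}. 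This is the refined analogue — \emph{bounded} flips in place of \emph{bounded} margin — of the sparsification conjecture for monadic stability.

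To establish this claim I would transport the monadic-stability sparsification program to the strongly flip-flat world while tracking the flip number throughout. First, strong flip-flatness implies flip-flatness and hence monadic stability, so the flipper game and the indiscernible-extraction techniques of \cite{dreier2023indiscernibles,flipper-game} are available. The key observation is that a $k$-flip is a bounded-complexity object: its underlying $k$-partition is a unary colouring, and both the flip and its inverse are $\fo$-definable from that colouring, so $G$ and $G \oplus \FF$ are mutually transducible through a colouring whose size depends only on $k$. I would use the witnessing flips not merely to \emph{reveal} a far-apart set but to \emph{peel off} the dense adjacency they encode, recording it in boundedly many unary and binary colour predicates over a weakly sparse host $\Dd$, much as a structurally nowhere dense graph is encoded over a nowhere dense one. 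Because $k$ is independent of $r$, the aim is to show that the recorded adjacency, and hence the margin of almost-wideness of $\Dd$, stays bounded independently of $r$; this is precisely what converts ``bounded flips'' into ``bounded margin'', i.e.\ almost-wideness rather than mere quasi-wideness, and it is exactly here that the Main Theorem re-enters to certify that the weakly sparse $\Dd$ so obtained is itself strongly flip-flat.

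The hard part is the sparsification step itself — producing the weakly sparse host $\Dd$ with $\Cc \subseteq \mathsf{T}(\Dd)$ at all. This is a parameter-refined local instance of the still-open sparsification conjecture, so I do not expect an unconditional elementary proof. The genuine difficulty is circular: strong flip-flatness yields control only \emph{after} the dense flips have been applied, whereas the Main Theorem can be invoked only \emph{once one has already landed} in the weakly sparse world, so one must extract a sparse model of $\Cc$ before knowing it exists. A second, quantitative difficulty is the uniformity bookkeeping: even granting a sparse host, one must prevent the transduction's complexity and the host's margin from growing with $r$, so that the $r$-independence of the flip number is faithfully transported to an $r$-independent margin. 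I would therefore first test the approach on inputs whose sparsification is already understood — structurally bounded-degree or structurally bounded-expansion classes — in order to calibrate this parameter tracking before attempting the general dense case.
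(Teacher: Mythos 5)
First, be aware that the statement you set out to prove is stated in the paper as Conjecture~\ref{conjecture} and is left \emph{open} there; the paper confirms it only for structurally bounded expansion classes (Theorem~\ref{thm:struct-bd-expansion}) and explicitly suggests the general case may be as hard as the sparsification conjecture. Your reduction is sound and matches the paper's own framing: by Theorem~\ref{theorem: main result}, uniformly almost-wide coincides with weakly sparse plus strongly flip-flat, and by Proposition~\ref{prop:transduce-sflipflat} transductions preserve strong flip-flatness, so the conjecture is equivalent to the single inclusion you isolate, namely that every strongly flip-flat class is contained in a transduction of a weakly sparse (hence uniformly almost-wide) class. Your suggestion to first calibrate the approach on structurally bounded expansion classes is exactly what the paper carries out in Theorem~\ref{thm:struct-bd-expansion}.

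The genuine gap is that your central claim receives no argument, and the sketch offered for it would fail as stated. The observation that a $k$-flip is $\fo$-encodable with boundedly many colours is correct for one graph and one fixed flip, but strong flip-flatness supplies a \emph{different} set of flips for every choice of $G\in\Cc$, $r$, $m$ and $A\subseteq V(G)$; there is no single bounded family of flips attached to the class whose ``undoing'' by a transduction would produce a weakly sparse host. Even if one fixes, for each $G$, a single witnessing $k$-flip $\FF_G$ and sets $H_G := G\oplus\FF_G$, the class $\{H_G \mid G\in\Cc\}$ need not be weakly sparse: the flips certify $r$-independence of one large subset for one choice of parameters, not the absence of large bicliques, so the ``peeling off the dense adjacency'' step presupposes the sparse host it is meant to construct --- a circularity you acknowledge yourself. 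Moreover, even granting the sparsification conjecture, one only obtains a nowhere dense host, and upgrading that to a uniformly almost-wide host is again open: this is precisely the weak version of Conjecture~\ref{conjecture} that the paper also states as a conjecture. In short, what you have written is a reasonable research program, faithful to the paper's own outlook, but it is not a proof, and no proof of this statement exists in the paper to compare it against.
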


      We believe  our main theorem is a step towards proving the conjecture, and allow us to confirm it on structurally bounded expansion graph classes. 
Ne{\v{s}}et{\v{r}}il and Ossona De Mendez showed that for a hereditary class of graphs $\CC$, if there are $s\in \NN$ and a function
$t:\NN\rightarrow\NN$ such that, for every $r\in \NN$, the graphs in the depth-$r$ shallow minor of $\CC$ exclude $K_{s,t(r)}$, then $\CC$ is uniformly almost-wide
\cite{nevsetvril2012sparsity}. As second point supporting the conjecture, we also argue that there exist infinitely many $s\in\NN$ such that, for every $r\in\NN$, the
$r$-subdivisions of $K_{s,N}$, for arbitrary large $N$, cannot be strongly flip-flat and we leave open the question of using them to characterise
strongly flip-flat graph classes using \emph{shallow vertex-minors} \cite{buffiere2024shallow}. 

After completing our work, we became aware of independent results by M\"ahlmann \cite[Lemma 13.10]{maehlmann2024}, which establish a statement closely related
to ours. In that work, M\"ahlmann proves that weakly sparse and flip-flat classes of graphs are uniformly quasi-wide. Since
flip-flatness characterizes monadic stability, and uniform quasi-wideness characterizes nowhere denseness, this result mirrors a
corollary of \cite{dvovrak2018induced,nevsetvril2020linear}, which shows that monadically stable and weakly sparse classes of graphs are
nowhere dense. 
While our main results differ, ours concern strong flip-flatness, whereas M\"ahlmann focuses on flip-flatness, the proofs are similar in spirit. Our approach, however, introduces a novel point of view: an induction on the set of flips instead of the usual induction on the independence radius. We believe this perspective brings new insights and contributes meaningfully to the ongoing research in this area.
%


\section{Preliminaries}

For two sets $A$ and $B$, we denote by $A\setminus B$ the set $\{x\in A\mid x\not \in B\}$ and by $A\Delta B$ the set $(A\setminus B)\cup (B\setminus A)$. The
set of positive integers is denoted by $\mathbb{N}$.

We assume familiarity with standard graph theoretical notions, and refer to \cite{bondy2008graph,diestel2024graph}, and deal only with undirected
graphs\footnote{One can extend the results to directed graphs by using a coding of directed graphs, through an invertible \fo-transduction, into bipartite
  undirected graphs.}.  The vertex set of a graph $G$ is denoted by
$V(G)$ and its edge set by $E(G)$. An edge between $u$ and $v$ in a graph $G$ is denoted by $uv$ (equivalently $vu$). 
We recall that, in a graph $G$, a path between two vertices $u$ and $v$ in $V(G)$ is a sequence of vertices $u=w_1,\cdots,w_l=v$ such that $w_i\neq w_j$, for
all $1\leq i< j\leq l$, and, for all $1\leq i< l$, we have that $w_iw_{i+1}\in E(G)$; we say that the length of this path is $l-1$; we often denote such a path
by $v-u$ when we are not interested in the internal vertices of the path. The distance between a pair of vertices is the length of the shortest path between them. For a subset $S$ of $V(G)$, we denote by $G[S]$ the subgraph of $G$ induced by $S$,
\ie, $V(G[S])=S$ and $E(G[S])=(S\times S)\cap E(G)$; by $G\setminus S$ we mean the subgraph of $G$ induced by $V(G)\setminus S$. We say that a set $B\subseteq
V(G)$ is \emph{$r$-independent in $G$} if, for all pairs $u,v\in B$, the distance between $u$ and $v$ in $G$ is strictly greater than $r$. By $K_t$ and $K_{t,t}$, we denote the complete graph on $t$ vertices and the complete bipartite graph with $t$ vertices in each part, respectively. A \emph{half-graph} of order $n$ is a bipartite graph with sides $\{a_1,\cdots,a_n\}$ and $\{b_1,\cdots,b_n\}$ such that, for all $1\leq i,j\leq n$, $a_i$ and $b_j$ are adjacent if and only if $i\leq j$.
\emph{Half-graphs} are graph theoretical representations of linear orders.
%
For an integer $r$, the $r$-subdivision of a graph $G$ is the graph obtained from $G$
by replacing each edge $uv$ of $G$ by a path $u-v$ of length at most $r+1$ that intersects $V(G)$ only on its endpoints. 

\paragraph*{Sparse graph classes.} We adopt the notion of graph sparsity as developed in \cite{nevsetvril2012sparsity}.
\begin{definition}
	Let $\CC$ be a class of graphs. We say that $\CC$ is \textbf{weakly sparse} if there exists a constant $t\in \NN$ such that every graph $G\in \CC$ excludes $K_{t,t}$ as a subgraph. 
\end{definition}

A graph $H$ is a \emph{depth-$r$ shallow minor} of $G$ if there are disjoint connected subgraphs $C_1,\ldots, C_{|V(H)|}$ of
   $G$, each of radius at most $r$, and a bijective mapping $\rho:V(H)\to \{C_1,\ldots,C_{|V(H)|}\}$ such that $uv\in E(H)$ only if there is an edge between a vertex in $\rho(u)$ and a
   vertex in $\rho(v)$.

 A class of graphs $\CC$ is nowhere dense if and only, for every integer $r$, there exists an integer $t_r$ such that all the graphs of $\CC$ exclude $K_{t_r}$ as depth-$r$ shallow minor.
By taking $t$ pairs of vertices from opposite sides of $K_{t,t}$ as disjoint connected subgraphs, one constructs $K_t$ as a depth-$1$ shallow minor.
Therefore, we can conclude that nowhere dense classes of graphs are weakly sparse.
Let us now introduce the core notions of quasi-wideness.

\begin{definition}\label{definition: uniformly quasi-wide and almost-wide classes}
Let $\CC$ be a class of graphs. We say that $\CC$ is \textbf{uniformly quasi-wide} if, for every $r\in \NN$, there exists a function $N_{r}:\NN\rightarrow\NN$
and an integer $s_r$ such that, for all $m\in \NN$ and $G\in \CC$, and for all $A\subseteq V(G)$ of size at least $N_{r}(m)$, we can find sets $S\subseteq V(G)$ and  $B\subseteq A$ where $|S|\leq s_r$ and $|B|> m$, such that $B$ is $r$-independent in $G\setminus S$. A graph class $\CC$ is said to be \textbf{uniformly almost-wide} if $s_r:= s$ is independent of $r$. (See~\autoref{fig:definition of almost wide}.)

We say that $(A,S)$ is an \emph{$(r,m)$-widenable instance of $G$ with respect to $s$ and $N_r$} if $A\subseteq V(G)$ with $|A|\geq N_r(m)$ and $S\subseteq
V(G)$ with $|S|\leq s$, and we can find a set $B\subseteq A$ such that  $|B|\geq m$ and $B$ is $r$-independent in $G\setminus S$. When $s$ and $N_r$ are clear
from the context we only say that $(A,S)$ is an $(r,m)$-widenable instance of $G$.
	
\end{definition}

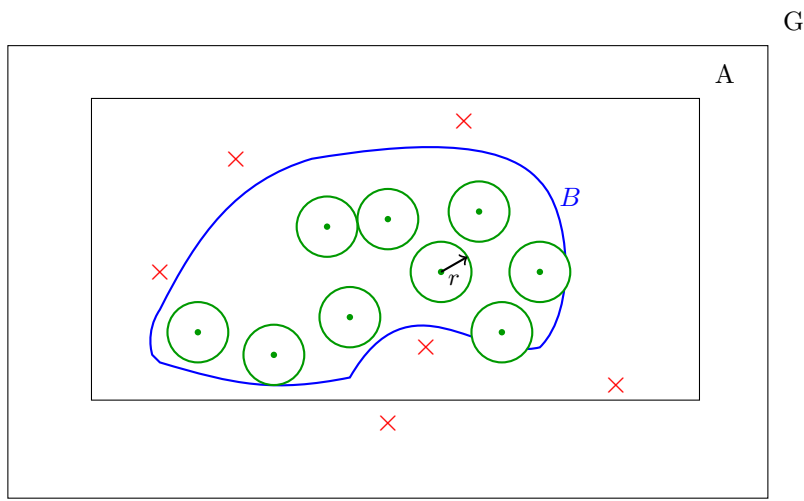
\begin{figure}[h]
\label{fig: fig1}
\center
    \begin{tikzpicture}
  \node[draw, minimum width=10cm, minimum height=6cm] (rect) at (0,0) {};
  \node[draw, minimum width=8cm, minimum height=4cm] (rect2) at (0.1,0.3) {};

  \node[above right=2pt and 2pt of rect] {G};
  \node[above right=2pt and 2pt of rect2] {A};

  \node at (-2,1.5) {\textcolor{red}{\Large$\times$}};
  \node at (1,2) {\textcolor{red}{\Large$\times$}};
  \node at (0,-2) {\textcolor{red}{\Large$\times$}};
  \node at (-3,0) {\textcolor{red}{\Large$\times$}};
  \node at (3,-1.5) {\textcolor{red}{\Large$\times$}};
  \node at (0.5,-1) {\textcolor{red}{\Large$\times$}};

 \draw[thick, blue]
    (-3,-1.2)
    .. controls (-2,-1.5) and (-1.5,-1.6) ..
    (-0.5,-1.4)
    .. controls (0.3,0) and (1.2,-1.2) ..
    (2,-1)
    .. controls (2.5,-0.5) and (2.4,0.8) ..
    (2,1.2)
    .. controls (1.5,1.8) and (0.2,1.7) ..
    (-1,1.5)
    .. controls (-2,1.2) and (-2.5,0.5) ..
    (-3,-0.5)
    .. controls (-3.2,-0.8) and (-3.1,-1.1) ..
    (-3.1,-1.1)
    -- cycle;
  \node[blue] at (2.4,1) {$B$};
\foreach \x/\y in {-2.5/-0.8, -1.5/-1.1, -0.5/-0.6, 0.7/0, 1.5/-0.8,
                     2/0, 1.2/0.8, 0/0.7, -0.8/0.6} {
    \filldraw[draw=green!60!black, fill=white, thick] (\x,\y) circle (0.4);
    \fill[green!60!black] (\x,\y) circle (1.2pt);
  }

  \draw[->, thick] (0.7,0) -- ++(30:0.4) node[midway, below ] {\small $r$};

\end{tikzpicture}
   
    \caption{ Assuming $|A|> N_r(m)$ is a subset of vertices of $G$, a graph in a uniformly quasi-wide class of graphs $\mathscr{C}$, then there are sets of vertices $S$ (the red crosses), with $|S|\leq s_r$,
      and $B\subseteq A$, with $|B|\geq m$, such that $B$ is $r$-independent in  $G\setminus S$. The function $N_r$ and the constant $s_r$ both depend on $r$ and the class
      $\mathscr{C}$. When $s_r$ is independent from $r$ and only depends on the
      class, we call that class uniformly almost-wide.}
       \label{fig:definition of almost wide}
    
\end{figure}

It has been shown that when it comes to hereditary classes of graphs, \ie, closed under induced subgraphs, uniform quasi-wideness characterises
nowhere denseness, and uniform almost-wideness characterises classes for which there exists a function $t:\NN\rightarrow \NN$ and an integer $s$
such that for every $r\in \N$, it excludes $K_{s,t(r)}$ as depth-$r$ shallow minor \cite{nevsetvril2012sparsity}. 

\paragraph*{Beyond sparse graph classes.} While nowhere dense classes of graphs are exactly the graph classes closed under subgraph and for which \fo
model-checking is fixed parameter tractable, there are other important graph classes for which \fo model-checking is still fixed parameter tractable, such as
unit-interval graphs or map graphs. It is conjectured in \cite{ALS2016} that hereditary graph classes for which \fo
model-checking is fixed parameter tractable correspond exactly to monadically dependent graph classes, a model-theoretic notion which will be defined later. The notions introduced in this quest are the ones behind the notion of
strongly flip-flat graph classes.

\begin{definition}[Flips]
Let $G$ be a graph. A \emph{flip} $\flip{F}$ is an operation that is specified by a pair of sets $\flip{F}= (A,B)$, where $A,B\subseteq V(G)$($A$ and $B$ can intersect and even be equal). We denote this operation by $G\oplus\flip{F}$ and it operates on $G$ by complementing the adjacency between $A$ and $B$, more precisely
\begin{align*}
	G\oplus \flip{F} := (V(G),E(G)\Delta ((A\times B)\cup (B\times A)) ).
\end{align*}
By $G\oplus\FF$, where $\FF =\{\flip{F_1},\cdots,\flip{F_k}\}$ is a set of flips, we mean $G\oplus \flip{F_1}\oplus \cdots\oplus\flip{F_k}$. For a flip $\flip{F}=(A,B)$ and a set of vertices $P\subseteq V(G)$, by $\flip{F}\setminus{P}$ we mean $(A\setminus P,B\setminus P)$. To simplify the notation, instead of $(G\setminus P)\oplus \flip{F}\setminus{P}$ we write $(G\setminus P)\oplus \flip{F}$.

%

 \end{definition}
 \begin{observation}\label{Observation: repetition and order of flips}
For any graph $G$ and \emph{flips} $\flip{F},\flip{F'}$ we have $G\oplus \flip{F}\oplus\flip{F'} = G\oplus\flip{F'}\oplus \flip{F}$ and $G\oplus\flip{F}\oplus\flip{F}=G$.
 \end{observation}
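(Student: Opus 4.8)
The plan is to reduce both identities to elementary algebraic properties of the symmetric difference operation $\Delta$, exploiting that a flip modifies only the edge set and that it does so by taking a symmetric difference with a \emph{fixed} set of pairs depending solely on the flip. Concretely, for a flip $\flip{F}=(A,B)$ I would write $S_{\flip{F}} := (A\times B)\cup(B\times A)$, so that by definition $G\oplus\flip{F} = (V(G),\, E(G)\,\Delta\, S_{\flip{F}})$. Since every flip leaves $V(G)$ unchanged, it suffices to track the evolution of the edge set, regarded as a subset of the ground set $V(G)\times V(G)$.

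The only facts I would invoke are that $\Delta$ is associative and commutative and satisfies $Y\,\Delta\,Y=\emptyset$ for every set $Y$; equivalently, the subsets of $V(G)\times V(G)$ form an abelian group under $\Delta$ in which every element is its own inverse. Unfolding the definition twice shows that the edge set of $G\oplus\flip{F}\oplus\flip{F'}$ is $(E(G)\,\Delta\,S_{\flip{F}})\,\Delta\,S_{\flip{F'}}$, and associativity together with commutativity yields
\begin{align*}
(E(G)\,\Delta\,S_{\flip{F}})\,\Delta\,S_{\flip{F'}} \;=\; (E(G)\,\Delta\,S_{\flip{F'}})\,\Delta\,S_{\flip{F}},
\end{align*}
which is precisely the edge set of $G\oplus\flip{F'}\oplus\flip{F}$; this establishes commutativity. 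Taking $\flip{F'}=\flip{F}$ in the same computation and using $S_{\flip{F}}\,\Delta\,S_{\flip{F}}=\emptyset$ gives that the edge set of $G\oplus\flip{F}\oplus\flip{F}$ equals $E(G)\,\Delta\,\emptyset = E(G)$, which proves the involution identity.

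There is no genuine obstacle here; the statement is a direct consequence of the group structure of the subsets of $V(G)\times V(G)$ under $\Delta$. The single point meriting a line of care is that each $S_{\flip{F}}$ is a \emph{symmetric} set of ordered pairs, so that the operation preserves undirectedness: indeed $(a,b)\in A\times B$ if and only if $(b,a)\in B\times A$, whence the symmetric difference of symmetric relations is again symmetric. Consequently $G\oplus\flip{F}$ is a well-defined undirected graph, and the two displayed identities hold for all graphs $G$ and all flips $\flip{F},\flip{F'}$.
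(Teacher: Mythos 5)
Your proof is correct, and it matches the reasoning the paper implicitly relies on: the paper states this as an unproven observation, and the standard justification is exactly your argument that edge sets under symmetric difference form an abelian group in which every element is its own inverse (together with the remark that each flip set $(A\times B)\cup(B\times A)$ is symmetric, so undirectedness is preserved). Nothing further is needed.
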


\begin{lemma}\label{lemma: flips do not intersect}
	Let $G$ be a graph and $\FF=\{\flip{F_1},\cdots,\flip{F_k}\}$ be a set of flips. There exists a set of flips $\FF'$ where $G\oplus\FF = G\oplus \FF'$ and $|\FF'|\leq \binom{2^{2k}+1}{2}$ such that $\FF'$ does not flip the adjacency of any fixed pair of vertices more than once.
\end{lemma}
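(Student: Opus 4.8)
The plan is to observe that the only feature of a vertex $v$ that the family $\FF$ can see is its pattern of membership in the $2k$ sets defining the flips. Write $\FF = \{\flip{F_1},\ldots,\flip{F_k}\}$ with $\flip{F_i} = (A_i,B_i)$. For each $v \in V(G)$ I would define its \emph{profile} $p(v) \in \{0,1\}^{2k}$, recording for each of the $2k$ sets $A_1,B_1,\ldots,A_k,B_k$ whether $v$ belongs to it. This partitions $V(G)$ into at most $2^{2k}$ classes $P_1,\ldots,P_m$ (with $m \leq 2^{2k}$), one for each realised profile.

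The key observation is that whether a single flip $\flip{F_i}=(A_i,B_i)$ complements the adjacency of an unordered pair $\{u,v\}$ depends only on the profiles $p(u),p(v)$, since this happens exactly when $(u \in A_i \wedge v \in B_i)$ or $(u \in B_i \wedge v \in A_i)$. By \autoref{Observation: repetition and order of flips}, the net effect of $\FF$ on $\{u,v\}$ is a complement if and only if an odd number of the $\flip{F_i}$ complement it; hence the \emph{net} effect also depends only on $p(u)$ and $p(v)$, equivalently only on the unordered pair of classes $\{P_a,P_b\}$ containing $u$ and $v$.

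I would then build $\FF'$ class by class. For each unordered pair $\{a,b\}$ with $a,b \in \{1,\ldots,m\}$ (allowing $a=b$), the flip $(P_a,P_b)$ complements precisely the adjacencies between $P_a$ and $P_b$ (and nothing else, since the classes are disjoint); I include $(P_a,P_b)$ in $\FF'$ exactly when the net effect of $\FF$ on pairs straddling these two classes is a complement. Since each unordered pair $\{u,v\}$ of vertices lies in a unique pair of classes, at most one flip of $\FF'$ touches it, so $\FF'$ never flips a pair more than once; and by the previous paragraph $G\oplus\FF' = G\oplus\FF$. Counting the unordered pairs $\{a,b\}$ (with repetition allowed) gives $\binom{m}{2}+m = \binom{m+1}{2} \leq \binom{2^{2k}+1}{2}$, which is the claimed bound on $|\FF'|$.

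I do not expect a genuine obstacle: the argument is essentially a normalisation of the flip family in the spirit of linear algebra over $\mathbb{F}_2$. The only point requiring care is the diagonal case $a=b$, where the flip $(P_a,P_a)$ complements the adjacencies \emph{within} $P_a$ and must be counted as a single element of $\FF'$; this is exactly why the final count is $\binom{m+1}{2}$ rather than $\binom{m}{2}$, and why the stated bound involves $\binom{2^{2k}+1}{2}$.
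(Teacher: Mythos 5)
Your proposal is correct and follows essentially the same route as the paper's proof: both partition $V(G)$ into the atoms determined by membership in the $2k$ sets of $\FF$, observe that the net effect of $\FF$ on a pair of vertices depends only on the pair of atoms containing them, and retain exactly those pairs of atoms flipped an odd number of times, yielding the bound $\binom{2^{2k}+1}{2}$. Your explicit treatment of the diagonal case $(P_a,P_a)$ and the phrasing via the parity of the net effect are slightly more careful than the paper's, but the argument is the same.
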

\begin{proof}
	Let $\flip{F_1},\cdots,\flip{F_k}$ be equal to $(P_1,P_2),\cdots,(P_{2k-1},P_{2k})$, respectively. 	
Let $\mathsf{Rest}= \{V(G) \setminus \cup_{i \in [2k]} P_i\} $ and 
$$\mathsf{Partition}=\{\cap_{i \in I} P_i \setminus \cup_{j \notin I} P_j \mid  I\subseteq [2k],I\neq \emptyset\} \cup \{\mathsf{Rest}\}.$$

%
	Indeed $\mathsf{Partition}$ is a partition of the vertices of $G$, and for a part $Q\in \mathsf{Partition}$, if $P_i\cap Q\neq\emptyset$, then $Q\subseteq P_i$. Notice that $|\mathsf{Partition}|\leq 2^{2k}+1$.

	This implies that each $P_i$ can be written as a union of some elements of the $\mathsf{Partition}$. Now, assume that for each $P_i$ there is a set of indices $I_i$ such that $P_i=\bigcup_{j\in I_i}Q_j$, and $Q_j$, $j\in I_i$, being elements of $\mathsf{Partition}$, then a flip $F_t=(P_{2t-1}, P_{2t})$ is equivalent to $\FF_t=\{(Q_j,Q_s)\mid j\in I_{2t-1}, s\in I_{2t}\}$. 
	We construct $\FF'$ to be a refinement of $\FF$ by
	taking flips from $\FF_t$, for $1\leq t\leq k$, that appear in an odd number of $\FF_t$.
	 As an even repetition of a flip does not alter the graph (Observation \ref{Observation: repetition and order of flips}), the other flips of $\FF_t$, $1\leq t\leq k$, i.e. those that appear an even number of times, can be avoided. Thus, $\FF'$ has at most $\binom{2^{2k}+1}{2}$ many elements.
	 Now, as the sets in the flips of $\FF'$ are parts of a partition, a fixed pair of vertices occurs in a unique flip. Thus, $\FF'$ does not change the adjacency of a pair of vertices more than once.	
\end{proof}
\begin{corollary}
	A set of flips $\FF=\{\flip{F_1},\cdots,\flip{F_k}\}$ on a graph $G$, can be seen as a binary relation on a partition of the vertex set of $G$. This binary relation specifies which pairs of parts undergo a flip and which do not.
\end{corollary}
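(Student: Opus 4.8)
The plan is to read the corollary directly off the construction used in the proof of \cref{lemma: flips do not intersect}. First I would invoke that lemma to replace $\FF$ by an equivalent set of flips $\FF'$, so that $G\oplus\FF = G\oplus\FF'$ and $\FF'$ flips the adjacency of each fixed pair of vertices at most once. The crucial feature to carry over from the proof is that every flip in $\FF'$ has the form $(Q,Q')$ with both $Q$ and $Q'$ parts of the partition $\mathsf{Partition}$ of $V(G)$ built from the Venn regions of $P_1,\dots,P_{2k}$; indeed, $\FF'$ was obtained by decomposing each original flip $(P_{2t-1},P_{2t})$ into flips between pairs of parts and keeping only those occurring an odd number of times.

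Next I would define a binary relation $R$ on $\mathsf{Partition}$ by declaring that $(Q,Q')\in R$ exactly when the flip $(Q,Q')$ belongs to $\FF'$; this is symmetric since $G\oplus(A,B)=G\oplus(B,A)$ by \autoref{Observation: repetition and order of flips}, and may contain loops (when $Q=Q'$, recording a complementation inside a single part). To see that $R$ faithfully records the effect of the flips, I would observe that, because $\FF'$ changes the adjacency of any fixed pair of vertices at most once and the two sides of each flip are entire parts, for any two parts $Q,Q'$ either every pair $u\in Q$, $v\in Q'$ has its adjacency complemented (precisely once) or none does. This uniform flip-status across each pair of parts is exactly the single bit of information encoded by whether $(Q,Q')\in R$.

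Finally I would conclude that applying $\FF$ to $G$ coincides with the operation ``for each $(Q,Q')\in R$, complement the adjacency between $Q$ and $Q'$'', so the pair $(\mathsf{Partition},R)$ is a complete description of the net effect of the flip. I do not expect a genuine obstacle here: the only point requiring care is the well-definedness of $R$, i.e.\ the uniformity of the flip status across each pair of parts, and this is immediate from the at-most-once property of \cref{lemma: flips do not intersect} together with the fact that the refined flips act between whole parts rather than arbitrary subsets.
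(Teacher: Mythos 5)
Your proposal is correct and follows exactly the route the paper intends: the corollary is stated as an immediate consequence of the construction in the proof of Lemma~\ref{lemma: flips do not intersect}, where the refined set $\FF'$ already consists of flips between whole parts of $\mathsf{Partition}$, each pair of parts being flipped at most once, so reading $\FF'$ as a (symmetric, possibly reflexive) binary relation on the partition is precisely the observation being recorded. The paper gives no separate proof, and your verification of well-definedness via the at-most-once property is the only point that needed checking.
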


We proceed to the formal definition of flip-flatness and strong flip-flatness, these notions are a generalisation of uniformly
  quasi-wideness and uniformly almost-wideness, where \emph{vertex deletion} is replaced by \emph{performing flips}.

\begin{definition}
Let $\CC$ be a class of graphs. We say that $\CC$ is \textbf{flip-flat} if, for every $r\in \NN$, there exists a function $N_r:\NN\rightarrow \NN$ and an integer $s_r$ such that for all $m\in \NN$ and $G\in \CC$, and for all $A\subseteq V(G)$ of size at least $N_r(m)$, we can find a set of flips $\FF$ and $B\subseteq A$ where $|\FF|\leq s_r$ and $|B|\geq m$, such that $B$ is $r$-independent in $G\oplus \FF$. A class $\CC$ is said to be \textbf{strongly flip-flat} if $s_r:= s$ is independent of $r$. (See~\autoref{fig: fig2}.)

We say that $(A,\FF)$ is an \emph{$(r,m)$-flippable instance of $G$} with respect to $N_r$ if $A\subseteq V(G)$ with $|A|\geq N_r(m)$ and $\FF$ is a set of
flips and we can find a set $B\subseteq A$ with $|B|\geq m$ such that $B$ is $r$-independent in $G\oplus \FF$. When $N_r$ is clear from the context, we only say
that $(A,\FF)$ is an $(r,m)$-flippable instance of $G$.
\end{definition}

\begin{figure}[h]

\center
    \begin{tikzpicture}
  \node[draw, minimum width=10cm, minimum height=6cm] (rect) at (0,0) {};
  \node[draw, minimum width=8cm, minimum height=4cm] (rect2) at (0.2,0.5) {};

  \node[above right=2pt and 2pt of rect] {G};
  \node[above right=2pt and 2pt of rect2] {A};

\draw[thick, blue]
    (-3,-1.2)
    .. controls (-2,-1.5) and (-1.5,-1.6) ..
    (-0.5,-1.4)
    .. controls (0.3,0) and (1.2,-1.2) ..
    (2,-1)
    .. controls (2.5,-0.5) and (2.4,0.8) ..
    (2,1.2)
    .. controls (1.5,1.8) and (0.2,1.7) ..
    (-1,1.5)
    .. controls (-2,1.2) and (-2.5,0.5) ..
    (-3,-0.5)
    .. controls (-3.2,-0.8) and (-3.1,-1.1) ..
    (-3.1,-1.1)
    -- cycle;
  \node[blue] at (2.4,1) {$B$};

%
%
%
%
%
%

%
\foreach \x/\y in {-2.5/-0.8, -1.5/-1.1, -0.5/-0.6, 0.7/0, 1.5/-0.8,
                     2/0, 1.2/0.8, 0/0.7, -0.8/0.6} {
    \filldraw[draw=green!60!black, fill=white, thick] (\x,\y) circle (0.4);
    \fill[green!60!black] (\x,\y) circle (1.2pt);
  }

  \draw[->, thick] (0.7,0) -- ++(30:0.4) node[midway, above right] {\small $r$};

\draw[red, thick, rotate=30] (-1.5,2) ellipse (0.3 and 0.6);
\draw[red, thick, rotate=150] (2,0) ellipse (0.3 and 0.6);

\draw[red, thick] (1,2) ellipse (0.3 and 0.6);
\draw[red, thick] (0.3,2) ellipse (0.3 and 0.6);

\draw[red, thick] (-3,0) ellipse (0.3 and 0.6);
\draw[red, thick] (-3.6,0) ellipse (0.3 and 0.6);

\draw[red, thick, rotate=45] (-0.5,-1.5) ellipse (0.3 and 0.6);
\draw[red, thick] (0.3,-1.5) ellipse (0.3 and 0.6);

\draw[red, thick] (3,-1.5) ellipse (0.3 and 0.6);
\draw[red, thick] (3.3,-1.5) ellipse (0.3 and 0.6);

\end{tikzpicture}
\caption{
Assuming $|A|> N_r(m)$ is a subset of vertices of $G$, a graph in a flip-flat class of graphs $\mathscr{C}$, then there is a set of flips $\mathcal{F}$ (the red ellipses), with $|\mathcal{F}|\leq s_r$,
      and a set $B\subseteq A$, with $|B|\geq m$, such that $B$ is $r$-independent in  $G\oplus \mathcal{F}$. The function $N_r$ and the constant $s_r$ both depend on $r$ and the class
      $\mathscr{C}$. When $s_r$ is independent from $r$ and only depends on the
      class, we call that class strongly flip-flat.
  }
  \label{fig: fig2}
   
\end{figure}
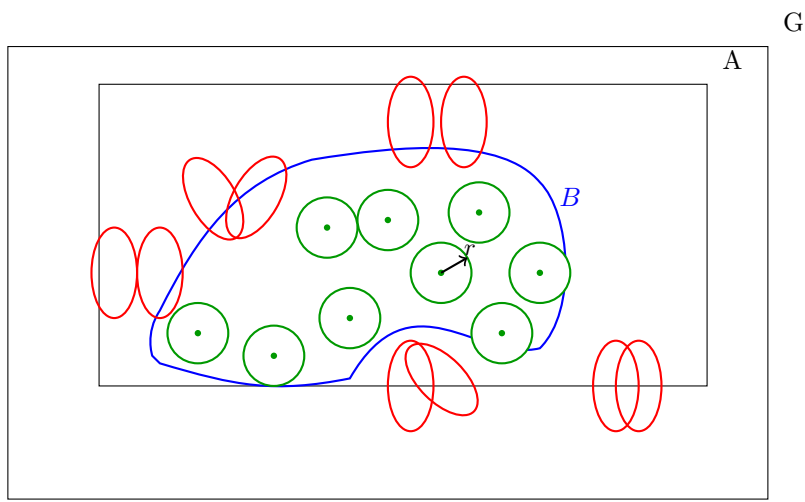

Note that one can mimic the removal of a vertex $v$ by the flip $(\{v\}, N(v))$. 
However, flip-flatness gives us a purely combinatorial characterisation of monadic stability as well, that we recall now. 

One can think of graphs as finite relational structures with the signature $\tau_{E}=\{E\}$, where $E$ is the relation symbol indicating the edge relation. We
refer to \cite{Libkin2004elements} for the definition of \fo formulas. We recall here the notion of a transduction.  Let $c \in \N$ and $\delta(x)$ and $\phi(x,y)$ be
first-order formulas over the extended signature $\tau_E^c:= \tau_E \cup \{C_1,\dots,C_c\}$, where $C_1,\cdots, C_c$ are unary predicates. We think of these
predicates as colours marking the vertices.  The \emph{transduction $T_{\delta,\phi}$} is the operation that maps a graph $G$ to the class $T_{\delta,\phi}(G)$
containing the graphs $H$ such that there exists a $\tau^c_E$-expansion $G^+$ of $G$, i.e., a colouring of $G$ with colours $\{C_1,\dots,C_c\}$, satisfying $V(H)=\{v \in V(G): G^+ \models \delta(v)\}$ and
\[E(H):= \left\{(u,v)\in V(H)^2 \colon u \neq v \land G^+ \models (\phi(u,v) \lor \phi(v,u))\right\}.\]
We write $T_{\delta,\phi}(\C):=\bigcup_{G \in \C} T_{\delta,\phi}(G)$ and say that \emph{$\Dcal$ is a transduction of a class $\C$}, or \emph{$\C$ transduces $\Dcal$}, if there exists a transduction $\Tcal_{\delta,\phi}$ such that  $\Dcal \subseteq \Tcal_{\delta,\phi}(\C)$. A graph class $\C$ is monadically dependent (resp. monadically stable) if $\C$ does not transduce the class of all graphs (resp. half-graphs).  We remark that these are not the original definitions of monadic stability and monadic dependence, but it follows from the work of Baldwin and Shelah that these notions can be defined by transductions \cite{baldwin1985second}.

We say that a class $\CC$ is structurally $\mathcal{P}$, if there exists a transduction $T_{\delta, \varphi}$ and a class $\DD$ with property $\mathcal{P}$ such that $\CC\subseteq T_{\delta, \varphi}(\DD)$.

\begin{theorem}[\cite{dreier2023indiscernibles}]
	A class of graphs is monadically stable if and only if it is flip-flat.
      \end{theorem}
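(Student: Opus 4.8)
Since the statement is an equivalence, I would prove the two implications separately, writing $\CC$ for the class.

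\textbf{Easy direction (flip-flat $\Rightarrow$ monadically stable).} I would argue by contraposition: if $\CC$ is not monadically stable it transduces arbitrarily large half-graphs, and I show that no such class can be flip-flat. Two ingredients are needed. First, \emph{flip-flatness is inherited by transductions}: given an $\fo$-transduction $T_{\delta,\phi}$, a large set $A$ in some $H\in T_{\delta,\phi}(G)$ pulls back to a large set in $G$, and a bounded-size flip witnessing flatness in $G$ can be pushed forward through the colouring and the $\phi$-rewiring — both governed by a bounded number of $\phi$-types — into a bounded-size flip in $H$. Second, \emph{half-graphs are not flip-flat}: fix $r=2$ and suppose a $k$-flip $\FF$ and a large $2$-independent set $B$ existed in a half-graph of order $n$. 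By \autoref{lemma: flips do not intersect} the flip acts through a partition into at most $2^{2k}+1$ parts, so by Ramsey applied to the linear order $1<\dots<n$ (colouring index $i$ by the pair (part of $a_i$, part of $b_i$)) one finds a large index set on which all flip-decisions between the relevant pairs of parts are constant. On this homogeneous suborder the induced graph in $G\oplus\FF$ is either a half-graph or its complement, and in either case any two of the retained vertices share a neighbour on the opposite side (hence lie at distance $\le 2$), contradicting $2$-independence. Combining the two ingredients, a flip-flat $\CC$ cannot transduce half-graphs.

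\textbf{Hard direction (monadically stable $\Rightarrow$ flip-flat).} This is where the real work lies. Fix $r$ and a large set $A\subseteq V(G)$, $G\in\CC$. The plan is to (1)~extract a highly homogeneous subset, (2)~use stability to trivialise its interaction pattern, and (3)~read off a bounded family of flips. For~(1), using that monadically stable classes have bounded neighbourhood complexity there are few distance-$r$ types of pairs from $A$, so Ramsey's theorem yields a long order-indiscernible sequence $I\subseteq A$ on which all such types agree. For~(2), I invoke the non-order property of monadically stable classes: an order-indiscernible sequence cannot encode a linear order, so $I$ is in fact \emph{totally} (set-)indiscernible, and the way a fixed external vertex $w$ attaches to the short-path structure of $I$ falls into one of boundedly many traces. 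For~(3), I bucket the ``connector'' vertices — those lying on a path of length $\le r$ joining two elements of $I$ — according to this bounded family of traces, and install one flip per bucket (together with flips among the parts of $I$ itself) so that in $G\oplus\FF$ every short path between two surviving elements of $I$ is destroyed; the surviving subsequence is then $r$-independent and still large, giving the required $(r,m)$-flippable instance.

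The main obstacle is precisely steps~(2)--(3) of the hard direction: converting the model-theoretic ``no order property'' into a genuinely \emph{bounded} count of connector classes, uniform over the whole graph, and — crucially — handling short paths that leave $I$ entirely and run through external vertices, since these are invisible in the indiscernibility of $I$ alone. This is the technical heart of \cite{dreier2023indiscernibles} and requires the full strength of stability theory (local type counting and the structure theory of indiscernibles in monadically stable classes), in contrast with the elementary order-Ramsey argument that suffices for the easy direction.
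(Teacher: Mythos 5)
The paper does not prove this statement at all: it is imported verbatim from \cite{dreier2023indiscernibles} and used as a black box, so there is no in-paper proof to compare yours against. Judged on its own terms, your outline of the easy direction (flip-flat $\Rightarrow$ monadically stable) is plausible: the Ramsey-on-indices argument showing that no bounded flip can make a large half-graph $2$-flat is essentially right (after homogenising with respect to the at most $2^{2k}+1$ parts from \autoref{lemma: flips do not intersect}, two surviving same-side vertices still share a neighbour). But note that your first ingredient --- that flip-flatness is preserved by transductions --- is itself a non-trivial lemma, not a one-line pushforward: flips do not commute naively with the $\phi$-rewiring, and distances in $H$ and in $G$ are related only through a Gaifman-locality argument with a changed radius; the paper only cites the analogous statement for the strong variant (Proposition \ref{prop:transduce-sflipflat}) rather than proving it.

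The genuine gap is in the hard direction, and you partly acknowledge it yourself. Step (1) invokes ``bounded neighbourhood complexity'' and ``few distance-$r$ types of pairs'' for monadically stable classes; for general monadically stable classes this is not an available off-the-shelf ingredient at this point (it is itself a difficult theorem, and in any case it is not what drives the argument in \cite{dreier2023indiscernibles}). Step (2) correctly recalls that the order property turns order-indiscernibles into set-indiscernibles, but this only controls the \emph{internal} structure of $I$; it says nothing about how an \emph{external} vertex attaches to $I$. The decisive ingredient of the cited proof is precisely the bounded-exception property: in a monadically stable class, every vertex of $G$ relates uniformly to all but a bounded number of elements of a sufficiently long indiscernible sequence, and it is this bound (not the order property alone) that makes your ``bounded family of traces'' in step (3) bounded and lets one install finitely many flips killing all short connecting paths, including those running entirely through external vertices. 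Your closing sentence defers exactly this point to the source; as written, the proposal is therefore an accurate roadmap of the known proof rather than a proof, and the theorem should remain a citation unless that bounded-exception machinery is actually developed.
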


Since an \fo-transduction preserves monadic stability, one can first ask whether \fo-transductions preserve strong flip-flatness, and indeed yes as shown in
\cite{eleftheriadis2024extension}.

\begin{proposition}[\cite{eleftheriadis2024extension}]\label{prop:transduce-sflipflat} If $\mathscr{C}$ is strongly flip-flat, then $T_{\delta, \varphi}(\mathscr{C})$ is also strongly flip-flat,
  for any \fo-transduction $T_{\delta, \varphi}$.
\end{proposition}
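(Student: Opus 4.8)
The plan is to exploit the locality of first-order logic to turn the flips we are allowed on $H \in T_{\delta,\varphi}(\CC)$ into a \emph{bounded} number of ``type flips'', while outsourcing the hard work of spreading vertices apart to the strong flip-flatness of $\CC$. Fix $H \in T_{\delta,\varphi}(\CC)$, witnessed by some $G \in \CC$ and a $c$-colouring $G^+$ with $V(H) = \{v : G^+ \models \delta(v)\}$, and let $s$ be the constant witnessing strong flip-flatness of $\CC$. The key structural input is a standard consequence of Gaifman's locality theorem together with the Feferman--Vaught composition lemma: there is a radius $t$ and a finite set $\mathcal{T}$ of $t$-local $1$-types (of bounded quantifier rank, over $\tau_E^{c}$ enriched with $2s$ extra colours, as explained below), together with a symmetric function $f : \mathcal{T} \times \mathcal{T} \to \{0,1\}$, such that whenever $u,v \in V(H)$ lie at distance greater than $2t$ \emph{in the graph to which $\varphi$ refers}, their $H$-adjacency equals $f(\mathrm{tp}(u), \mathrm{tp}(v))$. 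Crucially, $t$ and $|\mathcal{T}|$ depend only on $\varphi$, $c$ and $s$, and hence are independent of the target radius $r$.

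First I would reduce the far-apartness to $\CC$. Set $r' := 2rt$ and define $N_r := N^{\CC}_{r'}$. Given $A \subseteq V(H) \subseteq V(G)$ with $|A| \geq N_r(m)$, strong flip-flatness of $\CC$ at radius $r'$ yields a flip family $\FF_0$ with $|\FF_0| \leq s$ and a set $B_0 \subseteq A$, $|B_0| \geq m$, that is $r'$-independent in $G' := G \oplus \FF_0$. I would then mark the (at most $2s$) sets occurring in $\FF_0$ with fresh colours, obtaining a colouring of $G'$, and rewrite $\varphi$ (which speaks about $E_G$) into an equivalent formula over $E_{G'}$ and these colours, using $G = G' \oplus \FF_0$ and \autoref{Observation: repetition and order of flips}; this presents $H$ as a transduction of the recoloured $G'$ whose defining formula has quantifier rank and locality radius bounded in terms of $\varphi, c, s$ only. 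It is with respect to this graph $G'$ that we invoke the locality input above, so that $B_0$ is now $r'$-independent \emph{in the very graph governing $H$-adjacency of far pairs}.

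The flips to apply to $H$ are then the type flips. Partition $V(H)$ according to the $t$-local type into classes $D_1,\dots,D_p$ with $p \leq |\mathcal{T}|$, and let $\FF_1 := \{(D_i,D_j) : f(\tau_i,\tau_j)=1\}$; since the $D_i$ are pairwise disjoint, this is a binary relation on a partition in the sense of the Corollary following \autoref{lemma: flips do not intersect}, and $|\FF_1| \leq \binom{p+1}{2}$, a constant independent of $r$. For any $u,v \in V(H)$ at $G'$-distance greater than $2t$, the pair is flipped by $\FF_1$ exactly when $f(\mathrm{tp}(u),\mathrm{tp}(v))=1$, i.e.\ exactly when it was an $H$-edge; hence \emph{every} edge of $H \oplus \FF_1$ joins vertices at $G'$-distance at most $2t$. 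Consequently a path of length at most $r$ in $H \oplus \FF_1$ between two vertices of $B_0$ would witness $G'$-distance at most $2rt = r'$, contradicting the $r'$-independence of $B_0$ in $G'$. Therefore $B_0$ is $r$-independent in $H \oplus \FF_1$, and $(A, \FF_1)$ is an $(r,m)$-flippable instance of $H$ with $|\FF_1|$ bounded independently of $r$, as required.

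I expect the main obstacle to be upgrading mere \emph{non-adjacency} to genuine \emph{$r$-independence}: killing the $H$-edges inside $B_0$ alone would not prevent short paths through outside vertices, since type edges can jump across the whole graph. The decisive point is thus to eliminate type edges \emph{globally} (over all of $V(H)$, not just within $B_0$), so that the only surviving edges of $H \oplus \FF_1$ are $G'$-local; this is exactly what lets the large independence radius $r'$ in $G'$, furnished by $\CC$, be transferred down to the desired radius $r$ in $H$. The second delicate point, essential for \emph{strong} rather than plain flip-flatness, is to verify that the locality radius $t$ and the number of types $|\mathcal{T}|$ depend on $\varphi,c,s$ but not on $r$, so that $|\FF_1| \leq \binom{|\mathcal{T}|+1}{2}$ stays constant as $r$ grows.
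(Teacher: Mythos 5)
The paper does not actually prove this proposition: it is imported verbatim from \cite{eleftheriadis2024extension}, so there is no in-paper argument to compare yours against. Judged on its own, your proof is correct and is essentially the standard locality route one would expect for this statement. The two load-bearing steps both check out: (i) re-expressing $\varphi$ over $G'=G\oplus\mathcal{F}_0$ by marking the at most $2s$ flip sides with fresh unary predicates only substitutes the edge atom by a quantifier-free formula, so the Gaifman radius $t$ and the number of local types depend on $\varphi$, $c$ and $s$ but not on $r$; and (ii) after the type flips $\mathcal{F}_1$, every surviving edge of $H\oplus\mathcal{F}_1$ is $G'$-local, so a short path in $H\oplus\mathcal{F}_1$ between two vertices of $B_0$ collapses to a short $G'$-path, contradicting the $r'$-independence supplied by $\mathscr{C}$. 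Two cosmetic points you should fix. First, the Feferman--Vaught decomposition of a joint local formula around $\{x,y\}$ needs the two $t$-neighbourhoods to be disjoint \emph{and non-adjacent}, which requires $d_{G'}(x,y)>2t+1$ rather than $>2t$; this only changes the constant to $r':=(2t+1)r$. Second, as written $\mathcal{F}_1=\{(D_i,D_j): f(\tau_i,\tau_j)=1\}$ contains both ordered pairs $(D_i,D_j)$ and $(D_j,D_i)$ when $i\neq j$, and since each flip already complements $(A\times B)\cup(B\times A)$, applying both would cancel; you clearly intend unordered pairs, consistent with your bound $\binom{p+1}{2}$, so just say so. Neither issue affects the validity of the argument.
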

      
It has been proven that classes of graphs that are monadically stable and weakly sparse are nowhere dense
\cite{dvovrak2018induced,podewski1978stable, adler2014interpreting}. Thanks to Proposition \ref{prop:transduce-sflipflat}, every \fo-transduction of a uniformly almost-wide graph class is indeed
strongly flip-flat. Hence our motivation to show that classes of graphs that are strongly flip-flat and weakly sparse are uniformly
  almost-wide.

\section{Main result}
We are ready to state and prove the main result.
\begin{theorem}\label{theorem: main result}
	A class of graphs $\Cc$ is a weakly sparse and strongly flip-flat if and only if $\CC$ is uniformly almost-wide.
      \end{theorem}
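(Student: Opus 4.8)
The plan is to prove both implications, the direction from uniform almost-wideness being routine and the converse being the heart of the argument. For the routine direction, assume \CC is uniformly almost-wide, witnessed by a constant $s$ and functions $N_r$. To see that \CC is weakly sparse, I would show it excludes $K_{T,T}$ as a subgraph for $T:=\max(s+1,\lceil N_2(2)/2\rceil)$: if some $G\in\CC$ contained $K_{T,T}$ with sides $X,Y$, applying the definition to $A:=X\cup Y$ (of size $2T\ge N_2(2)$) with $r=2$, $m=2$ would yield $S$ with $|S|\le s<T$ and two vertices $b_1,b_2\in(X\cup Y)\setminus S$ at distance $>2$ in $G\setminus S$; but same-side vertices share a common neighbour in $Y\setminus S$ (resp.\ $X\setminus S$), which is nonempty since $|S|<T$, and opposite-side vertices are adjacent, a contradiction. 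For strong flip-flatness, I would mimic each deletion by a flip: given an $(r,m)$-widenable instance $(A,S)$ with witness $B$, set $\FF:=\{(\{v\},N(v))\mid v\in S\}$, so $|\FF|=|S|\le s$. Every edge between $S$ and $V(G)\setminus S$ is then flipped exactly once, hence deleted, so no path between vertices of $V(G)\setminus S$ traverses a vertex of $S$; therefore distances within $V(G)\setminus S$ coincide in $G\setminus S$ and in $G\oplus\FF$, making $B$ an $r$-independent witness for the $(r,m)$-flippable instance $(A,\FF)$ with the same $s$ and $N_r$.

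For the main direction, assume \CC is weakly sparse, excluding $K_{t,t}$, and strongly flip-flat with flip-bound $s$ and functions $N_r$. Given a target radius $r$ and size $m$, the strategy is to invoke strong flip-flatness at a much larger radius $R=R(r,s,t)$ and size $M=M(m,s,t)$, producing for every large enough $A$ a flippable witness: a set $\FF$ with $|\FF|\le s$ and $W\subseteq A$, $|W|\ge M$, that is $R$-independent in $G\oplus\FF$; then one converts this into a widenable witness by trading flips for a bounded number of deletions. Using \autoref{lemma: flips do not intersect} I would first normalise $\FF$ to a flip relation on a partition $\mathcal{Q}$ of $V(G)$ with at most $2^{2s}+1$ parts, so that a pair $\{x,y\}$ is flipped iff it joins a flipped pair of parts; call the $G$-edges joining flipped pairs of parts the \emph{flipped edges}. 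The pivotal observation is that, for any $S$, every path of length at most $R$ in $G\setminus S$ between two vertices of $W$ must use at least one flipped edge, since otherwise all of its edges survive the flip and it remains a path in $G\oplus\FF$, contradicting the $R$-independence of $W$.

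The conversion itself proceeds by \emph{induction on the set of flips}, removing the flipped pairs of parts one at a time, each step deleting only $O_t(1)$ vertices and shrinking the independence radius by a constant factor. Fix a flipped pair $(Q_a,Q_b)$; since $G[Q_a,Q_b]$ is $K_{t,t}$-free and the cross-adjacency between $Q_a$ and $Q_b$ is complemented in $G\oplus\FF$, any two vertices of $W$ on opposite sides of this pair must be $G$-adjacent, so by $K_{t,t}$-freeness at most $t-1$ vertices of $W$ lie on the minority side (at most $2t-1$ for a self-flipped part, as such a $W$ would have to induce a $G$-clique). Deleting these few vertices forces the surviving part of $W$ onto a single side; edges not crossing $(Q_a,Q_b)$ are then unaffected by this flip, so the only way a short $G\setminus S$-path between surviving witnesses can exploit it is by detouring into the opposite side through at least two flipped edges. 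I would then extract, by a Ramsey argument, a large sub-witness $W'$ homogeneous with respect to the flipped neighbourhoods into $Q_b$, isolate a bounded \emph{core} of $Q_b$ whose deletion destroys all such detours for $W'$ simultaneously, and thereby drop $(Q_a,Q_b)$ from the flip relation at the cost of $O_t(1)$ deletions and a halved radius. Iterating over all at most $\binom{2^{2s}+1}{2}+(2^{2s}+1)$ flipped pairs yields a set $S$ with $|S|=O_{s,t}(1)$, crucially \emph{independent of $r$}, and a set $B\subseteq W\subseteq A$ of size at least $m$ that is $r$-independent in $G\setminus S$; taking $N_r:=N_R(M)$ and this $s$ establishes uniform almost-wideness.

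The main obstacle is precisely this one-flip conversion for radius larger than $1$: while the $r=1$ case is settled cleanly by the biclique bound above, for longer paths the flip acts as a shortcut that dives into the opposite side and returns, and ruling these out with a number of deletions that does not depend on $r$ is delicate. The crux is to combine $K_{t,t}$-freeness, which bounds common neighbourhoods and hence the size of the separating core, with a Ramsey extraction that makes the flipped neighbourhoods of the retained witnesses uniform so a single bounded core suffices for all of them at once, all while tracking the radius budget so that the constant-factor losses incurred at each of the boundedly many peeling steps are absorbed by the initial choice of $R$.
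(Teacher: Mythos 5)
Your easy direction is fine and essentially the paper's (deletions simulated by the flips $(\{v\},N(v))$, and $K_{T,T}$ refuted directly from the definition of almost-wideness with $r=2$), and your hard direction has the right skeleton: normalise the flips via Lemma \ref{lemma: flips do not intersect}, observe that every short $G$-path between witnesses must use a flipped edge, and peel the flips off one at a time, paying a $K_{t,t}$-bounded number of deletions per flip. But the step you yourself flag as ``delicate'' is exactly the step that carries the whole theorem, and your sketch of it does not close. After forcing $W'$ onto one side $Q_a$ of a flipped pair, you must kill short $G$-paths that detour through $Q_b$, and you propose to delete a ``bounded core'' of $Q_b$ obtained by making the flipped neighbourhoods of $W'$ into $Q_b$ Ramsey-homogeneous. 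You never say what this core is or why its size is independent of $r$: homogeneity of first neighbourhoods does not control detours whose entry and exit points into $Q_b$ lie up to distance roughly $r/2$ from the witnesses, and $K_{t,t}$-freeness alone does not bound the number of such entry points without additional structure on the paths. The radius-halving bookkeeping also interacts badly with the fact that one part may belong to several flipped pairs, so a path excluded at one peeling step can reappear through a different pair at a later step.

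The paper closes precisely this gap by arranging, \emph{before} invoking any single-flip argument, that the witness set is pairwise at distance at most $r$ before the flip and $r$-independent after it. This is achieved by an iterated Ramsey extraction applied flip by flip (close/far in $G$, then in $G\oplus\flip{F_1}$, and so on), which is why the blow-up is $\mathscr{R}^k(m,t_0^2+t_0+m-2)$. Under that hypothesis every pair of witnesses has a short path broken by the last flip $(P_1,P_2)$, hence a short \emph{elementary segment} of length less than $\lceil r/2\rceil$ ending at some $a_i\in P_1$; one then shows that every vertex of $P_2$ is adjacent to at least $t_0$ of $t_0+1$ chosen $a_i$'s (otherwise a short path survives the flip), and $K_{t_0,t_0}$-freeness forces $|P_2|\le t_0^2+t_0-2$. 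In other words, one does not isolate a core of $Q_b$: one proves that an \emph{entire side of the flip} is already small and deletes it wholesale, with no loss of radius. Without the ``pairwise close before the flip'' set-up your peeling has no analogous handle, so as written the proof does not go through; supplying the before/after Ramsey step and replacing your core extraction by the one-side-is-small argument of Lemma \ref{case of k=1} would repair it.
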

      
We start with the following lemma.
\begin{lemma}\label{case of k=1}

Let $G$ be a graph, $t_0,r$ and $m$ be integers, and $B\subseteq V(G)$ be a set with $|B|\geq t^2_0+t_0+m-2$, such that all the elements of $B$ are pairwise at distance less than or equal to $r$. Assume there exists a flip $\flip{F}=(P_1,P_2)$ such that:
\begin{itemize}
	\item $\flip{F}$ does not contain a $K_{t_0,t_0}$ with one side in $P_1$ and the other side in $P_2$,
	\item and $B$ is $r$-independent in $G\oplus \flip{F}$.
\end{itemize}
 Then we can find a set $S\subseteq V(G)$, with $|S|\leq t^2_0+t_0-2$ and a set $B'\subseteq B$ with $|B'|\geq m$ such that $B'$ is $r$-independent in
 $G\setminus S$. 
    \end{lemma}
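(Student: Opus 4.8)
The plan is to replace the single flip by a deletion of edges, and then replace that edge-deletion by a small vertex-deletion; the second step is where the absence of a crossing $K_{t_0,t_0}$ is used.

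Call an edge $xy\in E(G)$ \emph{crossing} if its adjacency is toggled by $\flip F$, i.e.\ $x\in P_1,y\in P_2$ or $x\in P_2,y\in P_1$, and let $G'$ be $G$ with all crossing edges removed. First I would show that $B$ is already $r$-independent in $G'$. Indeed, take $u,v\in B$ and any path $\pi$ of length at most $r$ between them in $G$. If $\pi$ contained no crossing edge, then none of its edges would be affected by $\flip F$, so $\pi$ would still be a path of length at most $r$ in $G\oplus \flip F$, contradicting that $B$ is $r$-independent there. Hence every such $\pi$ uses a crossing edge, so no path of length at most $r$ survives in $G'\subseteq G$; that is, $d_{G'}(u,v)>r$.

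The second step converts this edge-deletion into a vertex-deletion. The key packing observation is that, since $B$ is $r$-independent in $G'$, the balls $\mathrm{Ball}_{G'}(b,\lfloor r/2\rfloor)$ for $b\in B$ are pairwise disjoint. Moreover, for any $u,v\in B$ and any path $\pi$ of length at most $r$ between them in $G$, write $e$ for the crossing edge on $\pi$ closest to $u$ and $e'$ for the one closest to $v$; the $u$-initial segment up to $e$ and the $v$-final segment after $e'$ are crossing-free, hence live in $G'$, and their lengths sum to at most $r-1$, so one of them has length at most $\lfloor r/2\rfloor$. Consequently $\pi$ carries a crossing edge with an endpoint in $\mathrm{Ball}_{G'}(u,\lfloor r/2\rfloor)\cup\mathrm{Ball}_{G'}(v,\lfloor r/2\rfloor)$. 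It follows that if $S$ is a set of vertices meeting every crossing edge incident to $\bigcup_{b\in B\setminus S}\mathrm{Ball}_{G'}(b,\lfloor r/2\rfloor)$, then $B\setminus S$ is $r$-independent in $G\setminus S$: a path of length at most $r$ in $G\setminus S$ between two elements of $B\setminus S$ would be such a path in $G$, carrying a crossing edge one of whose endpoints then lies both in $S$ and on the path, which is impossible.

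It remains to produce such an $S$ of size at most $t_0^2+t_0-2$, and this is where I expect the real difficulty to lie. The hypothesis that all elements of $B$ are pairwise at distance at most $r$ in $G$ is very strong: the crossing edges must link the whole of $B$ — whose half-radius balls are disjoint and which is $r$-independent in $G'$ — into a set of small diameter, so they cannot be spread out like a matching but must concentrate. I would argue that if more than $t_0^2+t_0-2$ vertices were needed to hit the relevant local crossing edges, then one could select $t_0$ vertices in $P_1$ and $t_0$ vertices in $P_2$ that are pairwise adjacent in $G$, i.e.\ a crossing $K_{t_0,t_0}$, contradicting the hypothesis; the exact constant $t_0^2+t_0-2$ should then fall out of the extremal bookkeeping of this extraction. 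Finally, setting $B':=B\setminus S$ gives $|B'|\geq |B|-|S|\geq m$, as required. The main obstacles are (i) carrying out this extraction with the precise constant, (ii) handling paths that cross between $P_1$ and $P_2$ several times rather than once, and (iii) dealing with vertices in the overlap $P_1\cap P_2$, which are flipped against all of $P_1\cup P_2$ and so contribute crossing edges in both directions.
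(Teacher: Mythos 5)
Your first reduction is sound and matches the paper's: every $G$-path of length at most $r$ between two members of $B$ must contain a crossing edge, and each such path decomposes into two crossing-free end segments (the paper's ``elementary segments''), at least one of which is short. But the heart of the lemma --- actually producing $S$ with $|S|\leq t_0^2+t_0-2$ --- is exactly the part you defer to ``extremal bookkeeping,'' and the direction you sketch for it does not work. A local hitting set for the crossing edges incident to the balls around $B$ can be forced to be large without those edges containing any biclique at all (think of a large induced matching of crossing edges, one per pair of balls), so ``if more than $t_0^2+t_0-2$ vertices were needed to hit them, extract a crossing $K_{t_0,t_0}$'' is not a valid inference. The density you need is not among the crossing edges themselves.

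The paper's mechanism is different and is the missing idea: it shows that \emph{one entire side of the flip is small}, and takes $S:=P_2$. Concretely, pick $t_0+1$ short segments $v_i - a_i$ ending at distinct $a_i\in P_1$ of length strictly less than $\lceil r/2\rceil$ (distinctness follows from your own observation that two short segments for different pairs cannot share their $P_1\cup P_2$-endpoint). Then every $u\in P_2$ must be $G$-adjacent to at least $t_0$ of $a_1,\dots,a_{t_0+1}$: if $u$ missed both $a_i$ and $a_j$, the flip would create the edges $a_iu$ and $ua_j$, and $v_i - a_i u a_j - v_j$ would be a path of length at most $r$ in $G\oplus\flip{F}$, contradicting $r$-independence of $B$ there. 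Combined with the exclusion of a crossing $K_{t_0,t_0}$, this near-complete adjacency forces $|P_2|\leq(t_0-1)(t_0+1)+(t_0-1)=t_0^2+t_0-2$; deleting all of $P_2$ kills every crossing edge, so $B\setminus P_2$ is $r$-independent in $G\setminus P_2$. Note also that the odd case $r=2r'+1$ needs a separate twist you have not addressed: your bound only gives segments of length at most $r'=\lfloor r/2\rfloor$, and when many segments have length exactly $r'$ the two families of segment-endpoints $\{a_i\}$ and $\{b_j\}$ must be pairwise adjacent in $G$ itself (a missing edge would be created by the flip and yield a path of length $2r'+1=r$), contradicting weak sparseness; only then can one guarantee $t_0+1$ segments of length strictly less than $r'$ and run the even-case argument.
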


\begin{proof}

	It is safe to assume that $t^2_0+t_0-2\geq 8t_0$, since if we exclude $K_{t_0,t_0}$ then we exclude $K_{t,t}$ for all $t\geq t_0$ as well.
	As all the elements of $B$ are pairwise at distance less than or equal to $r$, we infer that $\flip{F}$ is responsible for breaking the short paths between all pairs
        of vertices from $B$. 
        
        For a path $u=w_1,\cdots,w_l=v$ of length less than or equal to $r$ between two vertices $u$ and $v$, that is broken by $\flip{F}$, we define
        \emph{elementary segments} to be the parts $w_1,w_2,\cdots,w_i$ and $w_j,w_{j+1},\cdots,w_l$ such that $\{w_1,w_2,\cdots,w_i\}\cap(P_1 \cup P_2)= \{w_i\} $ and $\{w_j,w_{j+1},\cdots,w_l\}\cap (P_1 \cup P_2)=\{w_j\}$, i.e., the parts consisting of $v$ or $u$ and only one element of $P_1 \cup P_2$.

        Hence, each such path has
        exactly $2$ elementary segments. As the length of these paths is less than or equal to $r$, and at least one of their edges is between $P_1$ and $P_2$,
        we conclude that at least one of their elementary segments has length strictly less than $\lceil\frac{r}{2}\rceil$; we call this a \emph{short
          elementary segment}. Note that two short elementary segments with different endpoints in $B$ cannot share an endpoint in $P_1\cup P_2$ as this would
        be a path of length less than or equal to $r$ between the elements of $B$ that is not affected by the flip. 
	
	Let us fix $8t_0$ vertices of $B$, say $\{v_1,v_2,\cdots,v_{4t_0}, u_1,\cdots,u_{4t_0}\}$, and focus on the paths in $G$ of length less than or equal to $r$ between pairs
        $(v_1,u_1),(v_2,u_2),\cdots,(v_{4t_0},u_{4t_0})$. These are $4t_0$ paths of length less than or equal to $r$ with different endpoints, having at least $4t_0$ elementary segments that
        have length strictly less than $\lceil\frac{r}{2}\rceil$. At least $2t_0$ of these short elementary segments have an endpoint in the same side of $\flip{F}$. Say, without loss of generality, that $v_1 - a_1, v_2 - a_2, \cdots, v_{2t_0} - a_{2t_0}$ are the short elementary segments, where $a_1, a_2, \cdots, a_{2t_0} \in P_1$, and possibly $v_i = a_i$ for some $i$. Now, we divide the argument based on the parity of $r$.

	\textbf{The even case: $r=2r'$.} Pick $t_0+1$ of these short elementary segments. Say, without loss of generality  $v_1-a_1, v_2-a_2, \cdots,
        v_{t_0+1}-a_{t_0+1}$. Observe that every element in $P_2$ is adjacent to at least $t_0$ of $a_1,a_2,\cdots,a_{t_0+1}$. Indeed, if $u\in P_2$ is not adjacent
        to $a_i,a_j$ where $a_i\neq a_j$, then after the flip we will have that $v_i-a_iua_j-v_j$ is a path of length less than or equal to $r$ between $v_i$ and $v_j$;
        recall that $v_i$ and $v_j$ belong to $B$ and after performing $\flip{F}$ their distance in $G\oplus\flip{F}$ is strictly greater than $r$. For a vertex $a$ among $a_1,a_2,\cdots,a_{t_0+1}$, there are at most $t_0-1$ vertices in $P_2$ that are not adjacent to $a$; as otherwise, these $t_0$ vertices of $P_2$ and $\{a_1,a_2,\cdots,a_{t_0+1}\}\setminus\{a\}$ form a $K_{t_0,t_0}$. Hence, besides these $(t_0-1)(t_0+1)$ vertices of $P_2$, all the remaining form a biclique with $a_1,a_2,\cdots,a_{t_0+1}$. Thus, $|P_2|\leq (t_0-1)(t_0+1)+(t_0-1)$ (Figure \ref{Figure: figure}). We conclude that at least one of $P_1$ or $P_2$ has at most $t_0^2+t_0-2$ many vertices. Without loss of generality, let $P_2$ be the one such that $|P_2|\leq t_0^2+t_0-2$; now, $B\setminus P_2$ is $r$-independent in $G\setminus P_2$, and we have that $|B\setminus P_2|\geq m$.

	\textbf{The odd case: $r=2r'+1$.} Suppose first that at least $t_0$ many of these short elementary segments have length exactly $r'$. Say,
        without loss of generality $v_1-a_1, v_2-a_2, \cdots, v_{t_0}-a_{t_0}$ have length exactly $r'$. Let $b_1-u_1,b_2-u_2,\cdots, b_{t_0}-u_{t_0}$ be the other elementary
        segments of these paths, i.e. $b_1,b_2,\cdots,b_{t_0}\in P_2$ and for all $i=1,\cdots,t_0$ we have that $v_i-a_i$ and $b_i-u_i$ are the elementary
        segments of $v_i-u_i$. As for all $i\in [t_0]$ the length of $v_i-a_i$ is exactly $r'$ we conclude that the length of $b_i-u_i$ is less than or equal to $r'$ as
        well. Therefore, the sets $\{a_1,\cdots,a_{t_0}\}$ and $\{b_1,\cdots,b_{t_0}\}$ form a biclique in $G$. Indeed, if $a_i$ and $b_j$, when $i\neq j$, are not
        adjacent in $G$, then after the flip $v_i-a_ib_j-u_j$ will be a path of length less than or equal to $2r'+1 = r$, which is contradictory and a biclique on $\{a_1,\cdots,a_{t_0}\}$ and $\{b_1,\cdots,b_{t_0}\}$ is also a contradiction to weakly sparseness. Hence, at least $t_0+1$ of the short elementary segments have length strictly less than $r'$. Say, without loss of generality  $v_1-a_1, v_2-a_2, \cdots, v_{t_0+1}-a_{t_0+1}$.
	Now, as in the even case, all the vertices $u\in P_2$ are adjacent to at least $t_0$ of $a_1,a_2,\cdots,a_{t_0+1}$ and we get the same bound on the size of $P_2$. Thus, in this case, $B\setminus P_2$ is also $r$-independent in $G\setminus P_2$, and we have $|B\setminus P_2|\geq m$; hence the statement follows.

	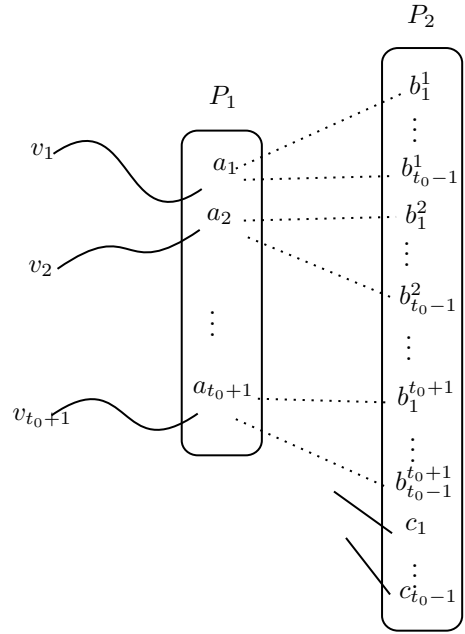
\begin{figure}
	
	\center
			\tikzset{every picture/.style={line width=0.75pt}} 

\begin{tikzpicture}[x=0.75pt,y=0.70pt,yscale=-1,xscale=1]

\draw   (402,107.5) .. controls (406.42,107.5) and (410,111.08) .. (410,115.5) -- (410,413.5) .. controls (410,417.92) and (406.42,421.5) .. (402,421.5) -- (378,421.5) .. controls (373.58,421.5) and (370,417.92) .. (370,413.5) -- (370,115.5) .. controls (370,111.08) and (373.58,107.5) .. (378,107.5) -- cycle ;
\draw   (302,152) .. controls (306.42,152) and (310,155.58) .. (310,160) -- (310,319) .. controls (310,323.42) and (306.42,327) .. (302,327) -- (278,327) .. controls (273.58,327) and (270,323.42) .. (270,319) -- (270,160) .. controls (270,155.58) and (273.58,152) .. (278,152) -- cycle ;
\draw    (206,164.5) .. controls (246,134.5) and (240,213.5) .. (280,183.5) ;
\draw    (208,226.5) .. controls (248,196.5) and (239,235.5) .. (279,205.5) ;
\draw    (205,305.5) .. controls (245,275.5) and (238,334.5) .. (278,304.5) ;
\draw  [dash pattern={on 0.84pt off 2.51pt}]  (297,172.5) -- (381,131.5) ;
\draw  [dash pattern={on 0.84pt off 2.51pt}]  (301,178.5) -- (375,176.5) ;
\draw  [dash pattern={on 0.84pt off 2.51pt}]  (301,200.5) -- (377,198.5) ;
\draw  [dash pattern={on 0.84pt off 2.51pt}]  (303,209.5) -- (376,242.5) ;
\draw  [dash pattern={on 0.84pt off 2.51pt}]  (308,296.5) -- (373,298.5) ;
\draw  [dash pattern={on 0.84pt off 2.51pt}]  (297,307.5) -- (371,343.5) ;
\draw    (346,346.5) -- (375,369) ;
\draw    (352,371.5) -- (374,402) ;

\draw (284,165.4) node [anchor=north west][inner sep=0.75pt]    {$a_{1}$};
\draw (281,192.4) node [anchor=north west][inner sep=0.75pt]    {$a_{2}$};
\draw (274,285.4) node [anchor=north west][inner sep=0.75pt]    {$a_{t_{0} +1}$};
\draw (193,157.4) node [anchor=north west][inner sep=0.75pt]    {$v_{1}$};
\draw (192,220.4) node [anchor=north west][inner sep=0.75pt]    {$v_{2}$};
\draw (184,300.4) node [anchor=north west][inner sep=0.75pt]    {$v_{t_{0} +1}$};
\draw (282,126.4) node [anchor=north west][inner sep=0.75pt]    {$P_{1}$};
\draw (381,82.4) node [anchor=north west][inner sep=0.75pt]    {$P_{2}$};
\draw (382,118.9) node [anchor=north west][inner sep=0.75pt]    {$b_{1}^{1}$};
\draw (377,232.4) node [anchor=north west][inner sep=0.75pt]    {$b_{t_{0} -1}^{2}$};
\draw (380,188.4) node [anchor=north west][inner sep=0.75pt]    {$b_{1}^{2}$};
\draw (378,163.4) node [anchor=north west][inner sep=0.75pt]    {$b_{t_{0} -1}^{1}$};
\draw (383,134.9) node [anchor=north west][inner sep=0.75pt]    {$\vdots $};
\draw (379,201.9) node [anchor=north west][inner sep=0.75pt]    {$\vdots $};
\draw (376,285.4) node [anchor=north west][inner sep=0.75pt]    {$b_{1}^{t_{0} +1}$};
\draw (375,331.4) node [anchor=north west][inner sep=0.75pt]    {$b_{t_{0} -1}^{t_{0} +1}$};
\draw (382,307.4) node [anchor=north west][inner sep=0.75pt]    {$\vdots $};
\draw (380,252.4) node [anchor=north west][inner sep=0.75pt]    {$\vdots $};
\draw (282,239.4) node [anchor=north west][inner sep=0.75pt]    {$\vdots $};
\draw (380,359.4) node [anchor=north west][inner sep=0.75pt]    {$c_{1}$};
\draw (377,396.4) node [anchor=north west][inner sep=0.75pt]    {$c_{t_{0} -1}$};
\draw (383,375.4) node [anchor=north west][inner sep=0.75pt]    {$\vdots $};

\end{tikzpicture}
		\caption{The paths from $v_i$ to $a_i$ are short elementary segments. For each $a_i$ there are at most ${t_0-1}$ many vertices in $P_2$ that are not adjacent to $a_i$. To not overcomplicate the figure we avoided putting edges. The dotted lines mean \emph{not adjacent}, where there is no dotted line, there is an edge.}\label{Figure: figure}
	\end{figure}
	
\end{proof}
By $\ramsey{m}{n}$ we denote the \emph{Ramsey number} for $m$ and $n$, that is,
the smallest integer $N$ such that every graph of size~$N$ contains either
a clique of size~$m$ or an independent set of size~$n$.
We are ready to prove the Theorem \ref{theorem: main result}. Let 
\begin{displaymath}
	\mathscr{R}^k(m,n) := \underbrace{\ramsey{m}{\ramsey{m}{\cdots,\ramsey{m}{n}}}}_\text{k times}.
\end{displaymath}
\begin{proof}


Let us first prove that uniformly almost-wide classes are strongly flip-flat.
Recall that removing a vertex $v$ can be done by performing a flip between $v$ and the neighbourhood of $v$, i.e., the flip $(\{v\}, N(v))$; note that after this flip the vertex $v$ is completely independent from the rest of the graph and is not adjacent to any other vertex. For a  
 uniformly almost-wide class of graphs $\CC$ there exists an integer $k$ such that for any $r\in \NN$ there exists a function $N_r\colon \NN\rightarrow\NN$, so that for any graph $G\in\CC$ and any set of vertices $A\subseteq V(G)$, with $|A|\geq N_r(m)$, we can find $k$ vertices $v_1,\cdots,v_k$ such that there exists a set of vertices $B\subseteq V(G)\setminus\{v_1,\cdots,v_k\}$, where $|B|\geq m$ and $B$ is $r$-independent in $G\setminus \{v_1,\cdots,v_k\}$. Now, instead of removing $\{v_1,\cdots,v_{k}\}$ we iteratively perform the flips $(\{v_1\}, N(v_1)),\cdots ,(\{v_k\},N(v_k))$, notice that the iterative application of the flips means that $N(v_i)$ is the neighbourhood of $v$ after the first $i-1$ flips. With these flips, we completely isolate $v_1,\cdots,v_k$ from the rest of the graph and therefore $B$ will be $r$-independent in $G\oplus \{(\{v_1\}, N(v_1)),\cdots ,(\{v_k\},N(v_k))\}$. Hence, for any $r\in \NN$ the same function $N_r$ and the same fixed integer $k$ witness strong flip-flatness of $\CC$. In addition, uniformly almost-wide classes of graphs are weakly sparse, as for any integer $k$, we cannot find any set of $2$-independent vertices after removing $k$ vertices in $K_{k+1,k+1}$.

Now, for the other direction, let $k$ be the size of the set of flips given by the strong flip-flatness of $\CC$. We prove this by induction on $k$. For $k=0$, the statement follows. 
As the induction hypothesis, we assume that if a class $\CC$ is weakly sparse, with parameter $t_0$, and for every $r\in \NN$ there exists a function $N_r:\NN\rightarrow \NN$ such that for all $G\in \CC$ and $m\in \NN$ and for all $A\subseteq V(G)$ with $|A|\geq N_r(m)$, we can find a $(r,m)$-flippable instance $(A,\FF)$ with $|\FF|<k$, then $\CC$ is uniformly almost-wide and $s'=(k-1)(t_0^2+t_0-2)$ is the constant witnessing the wideness.

Now, suppose that $\CC$ is weakly sparse, with parameter $t_0$ such that $t_0^2+t_0-2\geq 4t_0+2$, and strongly flip-flat with parameter $k$. We define $N_r'(m) = N_r(\mathscr{R}^k(m, t_0^2+t_0+m-2))$ to be the function in the definition of uniform almost-wideness. For any graph $G\in\CC$ and an arbitrary set $A\subseteq V(G)$ with $|A|\geq N_r'(m)$ we can find a $(r,m)$-flippable instance $(A,\FF)$.
	
If $|\FF|\leq k-1$ then we use the induction hypothesis for this particular set $A$ to find a set $S$ such that $(A,S)$ is a $(r,m)$-widenable instance of $G$.
	
	Hence, we can focus on the case where no subset of $A$ of size $m$ is $r$-independent after any $k-1$ flips; however there exists a set $B\subseteq A$ with $|B|\geq \mathscr{R}^k(m, t_0^2+t_0+m-2)$ such that $B$ is $r$-independent in  $G\oplus\FF$.

	Let us examine the iterative effect of $\FF:=\{F_1,\ldots,F_k\}$ on $B$. As $|B|\geq \mathscr{R}^k(m, t_0^2+t_0+m-2)$ at the beginning we face two
        cases, either $B$ contains a set of vertices of size $m$ where all the elements are pairwise at distance strictly greater than $r$, or it contains a set of vertices of size $\mathscr{R}^{k-1}(m, t_0^2+t_0+m-2)$ where all the elements are pairwise at distance less than or equal to $r$. Because of our assumption on the necessity of at least $k$ flips, the former can not happen and so there exists $B_0\subseteq B$ with $|B_0|\geq \mathscr{R}^{k-1}(m, t_0^2+t_0+m-2)$ such that all its elements are pairwise at distance less than or equal to $r$ in $G$. We use the same argument on $B_0$ but in $G\oplus \flip{F_1}$, this gives us a set $B_1\subseteq B_0$ with $|B_1|\geq \mathscr{R}^{k-2}(m, t_0^2+t_0+m-2 )$ such that all its elements are pairwise at distance less than or equal to $r$ in $G\oplus\flip{F_1}$. We repeat similarly and get $B_{k-1}\subseteq B$ with $|B_{k-1}|\geq t_0^2+t_0+m-2$ such that all its elements are pairwise at distance less than or equal to $r$ in $G\oplus\flip{F_1}\oplus\cdots\oplus\flip{F_{k-1}}$. 
	As $B$ is $r$-independent in $G\oplus \FF$,
	 we conclude that after the application of $\flip{F_k}$ on $G\oplus\flip{F_1}\oplus\cdots\oplus\flip{F_{k-1}}$, all the elements of $B_{k-1}$ should be
         at distance strictly greater than $r$. 
%

 By Lemma \ref{lemma: flips do not intersect}, it is safe to assume that the flips $\flip{F_1},\cdots,\flip{F_k}$ do not alter an edge twice. Hence, $\flip{F_k}$ is acting on the original edges of $G$, which excludes $K_{t_0,t_0}$ as subgraph, and not the edges added by $\flip{F_1},\cdots,\flip{F_{k-1}}$, i.e., there is no $K_{t_0,t_0}$ with its bipartition included in the side of the flip. Additionally, $B_{k-1}$ has size larger than $t^2_0+t_0+m-2$ and is
 $r$-independent in $(G\oplus\flip{F_1}\oplus\cdots\oplus\flip{F_{k-1}})\oplus \flip{F_k}$;
   thus, by Lemma \ref{case of k=1}, we can find sets $S$ with $|S|\leq t_0^2+t_0-2$ and $B'$ with $|B'|\geq m$ as promised. As $(G\oplus\flip{F_1}\oplus\cdots\oplus\flip{F_{k-1}})\setminus S$ and $(G\setminus S)\oplus\flip{F_1}\oplus\cdots\oplus\flip{F_{k-1}}$ are equal, we are now able to use the induction hypothesis on $(G\setminus S)\oplus\flip{F_1}\oplus\cdots\oplus\flip{F_{k-1}}$. Thus, we have a set $B''$ with $|B''|\geq m$ and a set $S'$ with $|S'|\leq (k-1)(t_0^2+t_0-2)$ such that $B''$ is $r$-independent in $(G\setminus S)\setminus S'$, that is $B''$ is $r$-independent in $G\setminus (S\cup S')$ where $|S\cup S'|\leq k(t_0^2+t_0-2)$.

	Therefore, for any $(r,m)$-flippable instance $(A,\FF)$ of $G$ with respect to $N_r'$ and $|\FF|\leq k$, we can find a $(r,m)$-widenable instance $(A,S)$ of $G$ with respect to $s'$ and $N_r'$ where $s'\leq k(t_0^2+t_0-2)$. As $s'$ is independent from $r$, we conclude that $\CC$ is uniformly almost-wide.
      \end{proof}

While our argument extends to show that flip-flat graph classes that are weakly sparse are uniformly quasi-wide, this result is equivalent to the known result that monadically stable and weakly sparse classes are nowhere dense \cite{dvovrak2018induced,podewski1978stable, adler2014interpreting}, so we do not elaborate further.
      
As a corollary, we obtain the following characterization of strongly flip-flat classes that have structurally bounded expansion.  We say that a class $\CC$ of graphs has \emph{bounded expansion} if there exists a function $f \colon \NN \to \NN$ such that for every $r \in \NN$ and every depth-$r$ shallow minor $H$ of a graph $G \in \CC$, we have $\frac{\|H\|}{|H|} \leq f(r)$, where $\|H\|$ denotes the number of edges and $|H|$ the number of vertices of $H$. 

Note that for bounded expansion classes, we require the edge density of depth-$r$ shallow minors to be bounded, whereas for nowhere dense classes, we ask that large cliques be excluded as depth-$r$ shallow minors. Accordingly, every class of bounded expansion is nowhere dense.

Let us recall that a structurally bounded expansion class $\CC$ of graphs is a class of graphs for which we can find a bounded expansion class of graphs $\DD$ and a transduction $T_{\delta,\varphi}$ such that $\CC\subseteq T_{\delta,\varphi}(\DD)$. 
\begin{theorem}\label{thm:struct-bd-expansion} A structurally bounded expansion graph class $\mathscr{C}$ is strongly flip-flat if and only if there is a uniformly 
  almost-wide graph class $\mathscr{D}$ of bounded expansion and a transduction $\tau$ such that $\mathscr{C}\subseteq \tau(\mathscr{D})$.
\end{theorem}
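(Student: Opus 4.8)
The statement is an equivalence, and I would prove the two implications separately, starting with the easy one. For the backward direction, suppose $\mathscr{D}$ is a uniformly almost-wide class of bounded expansion and $\tau$ a transduction with $\mathscr{C}\subseteq\tau(\mathscr{D})$. By \Cref{theorem: main result}, a uniformly almost-wide class is in particular weakly sparse and strongly flip-flat, so $\mathscr{D}$ is strongly flip-flat; by \Cref{prop:transduce-sflipflat}, $\tau(\mathscr{D})$ is then strongly flip-flat as well. Since the defining condition of strong flip-flatness is universally quantified over the members of the class, it is inherited by every subclass, so from $\mathscr{C}\subseteq\tau(\mathscr{D})$ I conclude that $\mathscr{C}$ is strongly flip-flat. (That $\mathscr{C}$ is structurally bounded expansion is automatic here, matching the standing hypothesis of the theorem.)

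For the forward direction, assume $\mathscr{C}$ is structurally bounded expansion and strongly flip-flat. By definition there is a class $\mathscr{D}_0$ of bounded expansion and a transduction $T$ with $\mathscr{C}\subseteq T(\mathscr{D}_0)$, and since bounded expansion classes are nowhere dense, hence weakly sparse, $\mathscr{D}_0$ is weakly sparse. The plan is to show that this sparse model can be taken to be strongly flip-flat: once this holds, \Cref{theorem: main result} immediately promotes $\mathscr{D}_0$ from \emph{weakly sparse and strongly flip-flat} to \emph{uniformly almost-wide}, and then $\mathscr{D}:=\mathscr{D}_0$ together with $\tau:=T$ witnesses the right-hand side.

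To make the model strongly flip-flat, I would exploit the invertibility of the transductions that witness structural bounded expansion: I invoke the known fact that a structurally bounded-expansion class admits a bounded-expansion model $\mathscr{D}_0$ together with a reverse transduction $\bar T$ recovering it, so that simultaneously $\mathscr{C}\subseteq T(\mathscr{D}_0)$ and $\mathscr{D}_0\subseteq\bar T(\mathscr{C})$ up to recolouring \cite{gajarsky2020first}. Granting this, the argument closes at once: as $\mathscr{C}$ is strongly flip-flat, \Cref{prop:transduce-sflipflat} gives that $\bar T(\mathscr{C})$, and hence its subclass $\mathscr{D}_0$, is strongly flip-flat; being also weakly sparse, $\mathscr{D}_0$ is uniformly almost-wide by \Cref{theorem: main result}, as needed.

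I expect this invertibility step to be the main obstacle and the only genuinely delicate point. The inclusion $\mathscr{C}\subseteq T(\mathscr{D}_0)$ is typically strict, so one cannot simply transport pathological configurations of $\mathscr{D}_0$ forward through $T$ into $\mathscr{C}$; routing the argument through a reverse transduction $\bar T$ that rebuilds the \emph{same} model $\mathscr{D}_0$ from $\mathscr{C}$ is exactly what lets strong flip-flatness travel back to the sparse side. As a consistency check that the uniform deletion bound is the right obstruction, one may reason contrapositively through the shallow-minor characterisation of uniform almost-wideness \cite{nevsetvril2012sparsity}: were $\mathscr{D}_0$ not uniformly almost-wide, then for every $s$ some depth $r$ would realise $K_{s,t}$ for all $t$ as a depth-$r$ shallow minor, which in a bounded-expansion class yields arbitrarily large $r$-subdivisions of $K_{s,N}$ as subgraphs, structures that, as recalled in the introduction for infinitely many $s$, are not strongly flip-flat, contradicting the strong flip-flatness of $\mathscr{D}_0$ established above.
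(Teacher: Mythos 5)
Your proposal is correct and follows essentially the same route as the paper: both rest on the fact from \cite{gajarsky2020first} that a structurally bounded expansion class $\mathscr{C}$ admits a bounded-expansion model that is itself a transduction of $\mathscr{C}$ (your $\bar T$ is the paper's $\tau_1$), after which Proposition \ref{prop:transduce-sflipflat} transports strong flip-flatness to the sparse side and Theorem \ref{theorem: main result} upgrades it to uniform almost-wideness. The paper only writes out this forward direction; your backward direction and the closing consistency check via shallow minors are correct additions but do not change the argument.
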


\begin{proof} If $\mathscr{C}$ is structurally bounded expansion, then there are transductions $\tau_1$ and $\tau_2$ such that
  $\tau_1(\mathscr{C})$ has bounded expansion and $\mathscr{C}\subseteq \tau_2(\tau_1(\mathscr{C}))$ \cite{gajarsky2020first}. Because transductions preserve
  strong flip-flatness (Proposition \ref{prop:transduce-sflipflat}), we
  have that $\tau_1(\mathscr{C})$ is strongly flip-flat and by Theorem \ref{theorem: main result}, we can conclude that $\tau_1(\mathscr{C})$ is uniformly almost-wide.  
  \end{proof} 
  Theorem \ref{thm:struct-bd-expansion} confirms Conjecture~\ref{conjecture} for graph classes of structurally bounded expansion.

\section{Conclusion}
A corollary of our main theorem is the confirmation of Conjecture \ref{conjecture} on strongly flip-flat graph classes that have structurally bounded expansion, partly
answering the characterisation question raised in \cite{eleftheriadis2024extension}. It seems however that there is
still a long way to completely characterise strongly flip-flat classes of graphs. A confirmation of our conjecture \ref{conjecture} will give a
characterisation of strongly flip-flat graph classes as its reverse has been proved (Proposition \ref{prop:transduce-sflipflat}).

In order to prove Conjecture \ref{conjecture}, we must find a nowhere dense preimage for any class $\CC$ of strongly flip-flat graph classes and then
argue that this preimage itself is strongly flip-flat, \ie, equivalent of saying that it is almost-wide. This might be as hard as the
\emph{sparsification conjecture} \cite{ossonademendez:LIPIcs.STACS.2021.2}, stated below. 
\begin{conjecture}
Every monadically stable graph class is structurally nowhere dense.	
\end{conjecture}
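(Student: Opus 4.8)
The plan is to establish the only remaining (forward) direction, since the reverse---that structurally nowhere dense classes are monadically stable---is already known \cite{podewski1978stable,adler2014interpreting,dreier2023indiscernibles}. Thus, given a monadically stable class $\CC$, the goal is to exhibit a nowhere dense class $\DD$ and a transduction $T_{\delta,\phi}$ with $\CC \subseteq T_{\delta,\phi}(\DD)$. The strategy mirrors the programme sketched in our conclusion for the strongly flip-flat case: rather than attacking $\CC$ directly, I would try to manufacture, for each $G \in \CC$, a \emph{sparse skeleton} $H_G$ decorated with a bounded palette of unary colours, so that $G$ is first-order definable from $H_G$, and then take $\DD$ to be the class of all such skeletons.

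First I would use the combinatorial handle on monadic stability. By the equivalence with flip-flatness \cite{dreier2023indiscernibles}, and more usefully by the flipper game \cite{flipper-game}, every $G \in \CC$ admits, for each radius $r$, a bounded-depth decomposition in which each node carries a single flip and the descent localizes the board. By \autoref{lemma: flips do not intersect} and its corollary, the bounded family of flips along any branch can be recorded as a binary relation on a bounded partition of $V(G)$, hence stored in a constant number of colours. The aim would then be to keep in $H_G$ only the adjacencies of $G$ that survive once every recorded flip is undone, and to let the transduction reinstate each flip by a first-order rule reading the colours---exactly the move by which we simulated vertex deletion via the flip $(\{v\},N(v))$ in the proof of \autoref{theorem: main result}. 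If the residual graph $H_G$ could be shown to exclude, for every $r$, some fixed clique as a depth-$r$ shallow minor, the class $\DD = \{H_G : G \in \CC\}$ would be nowhere dense and the argument would close.

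The hard part is the \emph{global and radius-independent} consistency of this sparsification, and this is where the conjecture's full difficulty resides. Flip-flatness and the flipper game are intrinsically local and parametrised by $r$: the flips that $r$-separate a large set need not agree with those required at radius $r+1$, and a single vertex may receive conflicting flip-data across different branches or across different radii. To obtain one class $\DD$ and one transduction valid for all of $\CC$, one would need a \emph{canonical} flip decomposition that is uniform in $r$, presumably extracted by a compactness or limit argument over the flipper-game trees, and one would then still have to prove that the resulting skeletons are genuinely sparse rather than merely of low radius. No such canonical sparsification is currently known; as we remark, producing it appears to be as hard as the conjecture itself, which is precisely why it remains open.
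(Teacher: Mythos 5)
There is a genuine gap---in fact, there is no proof here at all, and there could not be one: the statement you were given is the \emph{sparsification conjecture}, which the paper states explicitly as an open conjecture (citing \cite{ossonademendez:LIPIcs.STACS.2021.2}) and does not prove. Your proposal correctly identifies the known reverse direction and sketches the expected line of attack, but it then delegates the decisive step to a construction you yourself admit ``is currently not known.'' A proof cannot outsource its core to a hypothetical canonical sparsification; as written, your text is a research programme, not an argument, and it establishes nothing beyond what is already in the literature.

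To be concrete about where the sketch breaks down. First, monadic stability only yields flip-flatness, in which the number of flips $s_r$ and the flipper-game tree both depend on the radius $r$; there is no single flip family, nor a bounded colour palette, valid for all radii simultaneously, so the skeleton $H_G$ obtained by ``undoing every recorded flip'' is not even well-defined across radii---the flips witnessing $r$-independence and those witnessing $(r+1)$-independence may be pairwise incompatible. Second, even at a fixed radius, nothing in your sketch shows that the residual graph excludes a clique as a depth-$r$ shallow minor: flips do not preserve sparsity in the direction you need, and this is precisely the step where \autoref{lemma: flips do not intersect} and the simulation of vertex deletion by flips (used in \autoref{theorem: main result}) give no leverage, since they go from deletions to flips, not the other way. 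Third, the ``compactness or limit argument over flipper-game trees'' is invoked but never described; the paper itself observes that producing such a canonical sparsification ``might be as hard as the sparsification conjecture,'' which is exactly your situation. The statement remains open, and your proposal should be presented as a discussion of the obstacles, not as a proof.
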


A first step towards Conjecture \ref{conjecture} is probably assuming that the sparsification conjecture is true, and then prove the following, that we
state as a conjecture.

\begin{conjecture}[Weak version of Conjecture \ref{conjecture}]
Let $\CC$ be a strongly flip-flat class of graphs and $\DD$ be a nowhere dense class of graphs. If there exists a transduction $T_{\delta,\varphi}$ such that
$\CC\subseteq T_{\delta,\varphi}(\DD)$, then there exists a class $\DD'$ of uniformly almost-wide graphs and a transduction $T'_{\delta,\varphi}$ such that $\CC\subseteq
T'_{\delta,\varphi}(\DD')$.
\end{conjecture}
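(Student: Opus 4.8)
The plan is to reduce the statement to producing a \emph{nowhere dense} preimage that is moreover \emph{strongly flip-flat}. Indeed, by the Main Theorem (Theorem~\ref{theorem: main result}), a weakly sparse strongly flip-flat class is uniformly almost-wide, and every nowhere dense class is weakly sparse. Hence it suffices to exhibit a transduction $T'_{\delta,\varphi}$ together with a \emph{nowhere dense} class $\DD'$ that is \emph{strongly flip-flat} and satisfies $\CC\subseteq T'_{\delta,\varphi}(\DD')$. This is exactly the route announced in the discussion preceding the conjecture: find a nowhere dense preimage and argue that it is strongly flip-flat, \ie, uniformly almost-wide.

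The cleanest way to secure strong flip-flatness of the preimage is to arrange that $\DD'$ is itself a transduction of $\CC$, since by Proposition~\ref{prop:transduce-sflipflat} transductions preserve strong flip-flatness. Concretely, I would aim to prove a \emph{normalisation} statement: there exist transductions $\tau_1,\tau_2$ such that $\tau_1(\CC)$ is nowhere dense and $\CC\subseteq \tau_2(\tau_1(\CC))$. Given such $\tau_1,\tau_2$, the argument closes exactly as in Theorem~\ref{thm:struct-bd-expansion}: the canonical sparse model $\DD':=\tau_1(\CC)$ is a transduction of $\CC$, hence strongly flip-flat by Proposition~\ref{prop:transduce-sflipflat}; it is nowhere dense, hence weakly sparse; so by the Main Theorem it is uniformly almost-wide; and setting $T'_{\delta,\varphi}:=\tau_2$ yields $\CC\subseteq T'_{\delta,\varphi}(\DD')$. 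This mirrors the bounded-expansion case, where the normalisation is supplied by the sparsification of structurally bounded expansion classes.

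The main obstacle is precisely this normalisation. The hypothesis furnishes only an \emph{arbitrary} nowhere dense class $\DD$ with $\CC\subseteq T_{\delta,\varphi}(\DD)$; there is no reason for $\DD$ to be a transduction of $\CC$, so the forward preservation of Proposition~\ref{prop:transduce-sflipflat} cannot be invoked to conclude that $\DD$ is strongly flip-flat. Producing a \emph{recoverable} sparse model $\tau_1(\CC)$ of a structurally nowhere dense class is a strengthening of the sparsification conjecture and leans on the structure theory of monadically stable classes; this is where the genuine difficulty concentrates, and why the statement remains a conjecture.

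As a fallback that sidesteps the full normalisation, I would try to prove directly that the given class $\DD$---after restricting, if needed, to those graphs whose $T_{\delta,\varphi}$-image meets $\CC$---is strongly flip-flat, by pulling the bounded family of flips witnessing strong flip-flatness of $\CC$ back through the transduction to a bounded family of flips in $\DD$. The delicate point is that $T_{\delta,\varphi}$ intertwines vertices and colours, so a single flip on the image side need not correspond to a bounded modification on the preimage side; establishing that a fixed number of image-side flips forces only a fixed number of preimage-side flips is the technical heart, and appears no easier than the normalisation route.
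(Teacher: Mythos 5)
The statement you are trying to prove is left as an \emph{open conjecture} in the paper --- there is no proof of it to compare against --- and your proposal does not close it either: it contains a genuine gap, which you yourself identify. Your closing steps are sound and mirror exactly the paper's proof of Theorem~\ref{thm:struct-bd-expansion} and its concluding discussion: if one can exhibit transductions $\tau_1,\tau_2$ with $\tau_1(\CC)$ nowhere dense and $\CC\subseteq\tau_2(\tau_1(\CC))$, then $\tau_1(\CC)$ is strongly flip-flat by Proposition~\ref{prop:transduce-sflipflat}, weakly sparse because nowhere dense classes are weakly sparse, hence uniformly almost-wide by Theorem~\ref{theorem: main result}, and taking $\DD':=\tau_1(\CC)$ finishes the argument. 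But the normalisation statement itself --- the existence of a \emph{recoverable} nowhere dense sparse model $\tau_1(\CC)$ --- is precisely the missing ingredient. In the bounded expansion case this is a known theorem (cited in the paper as \cite{gajarsky2020first}), which is the only reason Theorem~\ref{thm:struct-bd-expansion} goes through; no analogue is known for nowhere dense classes, and producing one is at least as hard as the sparsification conjecture that the paper invokes. So your argument is conditional on an unproven statement that is itself an open problem, which means you have reduced one conjecture to another rather than proved anything.

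Note also that the hypothesis of the weak conjecture is strictly weaker than what your reduction needs: you are given only that \emph{some} nowhere dense $\DD$ transduces $\CC$, and, as you correctly observe, Proposition~\ref{prop:transduce-sflipflat} only preserves strong flip-flatness in the forward direction, so nothing forces $\DD$ to be strongly flip-flat (or almost-wide). Your fallback --- pulling a bounded family of image-side flips back through $T_{\delta,\varphi}$ to a bounded family of preimage-side modifications --- is not carried out, and indeed a single flip on the image side generally corresponds to an unbounded perturbation of $\DD$ because the transduction's colouring and formula $\phi$ can scatter the two sides of a flip arbitrarily across $\DD$. In short: your plan is the natural one (it is the same plan the paper sketches in its conclusion), but neither route is executed, so the conjecture remains open.
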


Our conjecture may not be the easiest way to characterise strongly flip-flat graph classes. One can, for instance, look into forbidden induced subgraphs.  In the following we give an example of a nowhere dense class of graphs that is not strongly flip-flat. We denote by $G^s$ the $s$-subdivision of
 a graph $G$. 

 \begin{theorem}\label{Example: not SFF} Let $\CC:=\{K^s_{s,N}\mid s\in \NN,~~N\in\NN\}$. Then $\CC$ is not strongly flip-flat.

 \end{theorem}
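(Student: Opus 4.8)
The plan is to argue by contradiction, leveraging our main theorem rather than attacking strong flip-flatness directly. Suppose $\CC$ were strongly flip-flat. Since subdivisions are sparse, $\CC$ is also weakly sparse (verified below), so \autoref{theorem: main result} would force $\CC$ to be uniformly almost-wide, witnessed by some constant $s_0$ independent of $r$. I would then refute this by exhibiting, for the prescribed $s_0$, a radius $r$ and arbitrarily large members of $\CC$ in which deleting any $s_0$ vertices fails to expose even a \emph{pair} of $r$-independent vertices inside a chosen set $A$. Routing through the main theorem is what makes the argument tractable, and is the conceptual crux: a direct refutation of strong flip-flatness would have to rule out all bounded-size flips, which are strictly more powerful than vertex deletions, whereas after this reduction one only reasons about deletions, where an elementary disjoint-paths argument suffices.

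First, weak sparsity. Writing $K^q_{q,N}$ for the $q$-subdivision of $K_{q,N}$, every edge of $K_{q,N}$ becomes a path of length $q+1$, so every cycle is stretched by the factor $q+1$. Since $K_{q,N}$ has girth $4$ when $q\geq 2$ and is acyclic when $q=1$, each $K^q_{q,N}$ has girth at least $4(q+1)\geq 8$ and in particular contains no $C_4$. Hence no member of $\CC$ contains $K_{2,2}$ as a subgraph, so $\CC$ is weakly sparse with constant $t=2$.

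Next, the failure of uniform almost-wideness in the sense of \autoref{definition: uniformly quasi-wide and almost-wide classes}. Fix the putative constant $s_0$ and choose any $q>s_0$; set $r:=2(q+1)$ and let $A$ be the large side $R=\{\rho_1,\dots,\rho_N\}$ of $G:=K^q_{q,N}$. Any path between two vertices $\rho_j,\rho_{j'}$ must traverse the small side, and the shortest ones run out to a hub $\ell_i$ and back down, yielding $q$ internally vertex-disjoint paths $\rho_j-\ell_i-\rho_{j'}$ (one per hub $\ell_1,\dots,\ell_q$), each of length exactly $2(q+1)=r$. Consequently, if $S\subseteq V(G)$ with $|S|\leq s_0<q$, then by pigeonhole $S$ misses the internal vertices of at least one of these $q$ disjoint paths, so that path survives in $G\setminus S$ and $d_{G\setminus S}(\rho_j,\rho_{j'})\leq r$. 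Thus no two vertices of $R\setminus S$ are $r$-independent in $G\setminus S$, and since $|R\setminus S|\geq N-s_0$ such a pair always exists once $N>s_0+1$.

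Finally, I would assemble the contradiction. Given any function $N_r$, pick $N:=\max\{N_r(2),\,s_0+2\}$, so that $|A|=|R|=N\geq N_r(2)$; the previous paragraph shows that for every $S$ with $|S|\leq s_0$ the pair $(A,S)$ fails to be a $(r,2)$-widenable instance of $G$. This contradicts $\CC$ being uniformly almost-wide with constant $s_0$ at radius $r$, and therefore $\CC$ is not strongly flip-flat. The only step demanding care is the quantifier bookkeeping: the bad radius $r=2(q+1)$ must be chosen \emph{after} the uniform constant $s_0$, which is exactly where the diagonal choice $q>s_0$ — available because $\CC$ ranges over all subdivision parameters — does the work.
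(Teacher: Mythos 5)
Your proposal is correct and follows essentially the same route as the paper: establish weak sparsity ($K_{2,2}$-freeness), show the class is not uniformly almost-wide by diagonalising the subdivision parameter against the putative constant $s_0$, and conclude via \autoref{theorem: main result}. Your version is in fact more carefully quantified than the paper's — the radius $r=2(q+1)$ and the $q$ internally disjoint paths make the non-wideness argument airtight, whereas the paper's choice $r=s+1$ is off (two large-side vertices are already at distance $2(s+1)>s+1$, so the correct threshold is the one you use).
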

 \begin{proof}
 Every graph $G$ in $\CC$ excludes $K_{2,2}$ as subgraph; hence $\CC$ is weakly sparse. Note that $K^{2s}_s$ is the largest $2s$-subdivided clique, where the main vertices of the clique are the $s$ vertices of the small side of $K^{s}_{s,N}$. Therefore, $\CC$ is nowhere dense as well.
 
  Let $r:= s+1$. We pick $K^s_{s,N}$ with an arbitrary large $N$ and an arbitrary large set
 of vertices $A$ from the large side of $K_{s,N}^s$. To make any subset of $A$ $r$-independent, we need to remove all the $s$ vertices of the other
 side. Therefore $\CC$ is not almost-wide (one can infer that $\CC$ is not almost-wide by \cite[Theorem 8.4]{nevsetvril2012sparsity}). 	By Theorem \ref{theorem: main result}, $\CC$ is not strongly flip-flat.
 \end{proof}

As an alternative, one could look for a characterisation via forbidden induced subgraphs occurring in flips, as in the case of monadic stability provided in \cite{dreier2024flip} or bounded shrubdepth provided in \cite{DBLP:conf/icalp/Mahlmann25}, or in terms of shallow vertex-minors, as in \cite{buffiere2024shallow}.

Characterisations can be combinatorial and in terms of games, such as the flipper game for flip-flat classes of graphs
\cite{torunczyk2023flip}. 
We remark that the flipper cannot, in general, win in a constant number of rounds---independent of the radius---on strongly flip-flat classes of graphs. This game, in which the flipper has a strategy requiring a constant number of rounds independent of the radius, corresponds to the radius infinity game; as we can consider radius $n$ when playing on an $n$-vertex graph. This characterises instead the class of graphs with
  bounded shrubdepth.


\bibliography{Bibliography.bib}

\end{document}